\documentclass[english,twoside,11pt]{article}

\usepackage[lmargin=1in,rmargin=1in,tmargin=1in,bmargin=1in]{geometry}

\usepackage{float}
\usepackage{graphicx}
\usepackage{color}
\usepackage{tikz}
\usetikzlibrary{backgrounds}

\usepackage[margin=20pt]{caption}

\usepackage{amsmath}
\usepackage{amssymb}
\usepackage{amsthm}
\usepackage{mathrsfs}

\usepackage[english]{babel}
\usepackage{flafter}
\usepackage{array}
\usepackage{paralist}


\graphicspath{{.}{graphics/}}

\captionsetup[figure]{labelfont=bf,font=small}

\setlength{\unitlength}{1cm}

\newtheorem{theorem}{Theorem}
\newtheorem{corollary}[theorem]{Corollary}
\newtheorem*{corollary*}{Corollary}
\newtheorem{lemma}[theorem]{Lemma}
\newtheorem{proposition}[theorem]{Proposition}
\newtheorem*{proposition*}{Proposition}

\newtheorem*{property*}{Property}

\theoremstyle{definition}
\newtheorem{definition}[theorem]{Definition}
\newtheorem*{definition*}{Definition}

\theoremstyle{remark}
\newtheorem{remark}[theorem]{Remark}
\newtheorem*{remark*}{Remark}

\newtheorem*{example*}{Example}


\setcounter{secnumdepth}{2}

\usepackage{framed}
\usepackage{algorithm}
\usepackage[noend]{algorithmic}


\input{mathdots}

\pagestyle{plain}

\title{Fault-Tolerant Shortest Paths - Beyond the Uniform Failure Model}

\author{
\begin{minipage}{\linewidth}
\begin{center}
David Adjiashvili\\[0.3\baselineskip]
\normalsize Institute for Operations Research, ETH Zurich\\
\normalsize david.adjiashvili@ifor.math.ethz.ch
\end{center}
\end{minipage}
}


\date{\today}

\begin{document}

%
%

\maketitle

\begin{abstract}
The overwhelming majority of survivable (fault-tolerant) network design models assume a uniform scenario set. 
Such a scenario set assumes that every subset of the network resources (edges or vertices) of a given 
cardinality $k$ comprises a scenario. While this approach yields problems with clean combinatorial structure 
and good algorithms, it often fails to capture the true nature of the scenario set coming from applications. 

One natural refinement of the uniform model is obtained by partitioning the set of resources into \textit{faulty} 
and {secure} resources. The scenario set contains every subset of at most $k$ faulty resources. This work 
studies the \textit{Fault-Tolerant Path} (FTP) problem, the counterpart of the Shortest Path problem in 
this failure model. We present complexity results alongside exact and approximation algorithms for FTP. 
We emphasize the vast increase in the complexity of the problem with respect to its uniform analogue, 
the Edge-Disjoint Paths problem.

\end{abstract}

\section{Introduction}\label{sec:intro}
The \textit{Minimum-Cost Edge-Disjoint Path} (EDP) problem is a classical network design problem, defined as follows.
Given an edge-weighted graph $G=(V,E)$, two terminals $s,t\in V$ and an integer parameter $k\in \mathbb{Z}_+$, find
$k$ edge-disjoint paths connecting $s$ and $t$ with minimum total cost. EDP is motivated by the following survivable network
design problem: what is the connection cost of two nodes in a network, given that any $k-1$ edges can be a-posteriori removed
from the graph. The implicit assumption in EDP is that every edge in the graph in equally vulnerable. Unfortunately, this
assumption is unrealistic in many applications, hence resulting in overly-conservative solutions. 
This paper studies a natural refinement of the EDP problem called the \textit{Fault-Tolerant Path Problem} (FTP), formally
defined as follows.

\begin{framed}
\textbf{Instance: } An edge-weighted graph $G=(V,E)$, two nodes $s,t\in V$, a subset of the edges $M\subset E$ and an integer
$k\in \mathbb{Z}_+$.

\textbf{Problem: } Find minimum cost $S \subset E$ such that $S \setminus F$ contains an $s$-$t$ path for every $F\subset M$ with
$|F| \leq k$.
\end{framed}

Note that FTP becomes EDP when $M = E$. We let $\Omega(M,k)$ denote the set of possible faults, namely $\Omega(M,k) = \{A\subset M: |A| \leq k\}$.

A well-known polynomial algorithm for EDP works as follows. Assign unit capacities to all edges in $G$ and find a minimum-cost
$k$-flow from $s$ to $t$. The integrality property of the \textit{Minimum-Cost Flow} (MCF) problem guarantees that an
extreme-point optimal solution is integral, hence it corresponds to a subset of edges. It is then straightforward to verify that
this set is an optimal solution of the EDP problem (for a thorough treatment of this method we refer to the book of Schrijver~\cite{Schrijver}).

The latter algorithm raises two immediate questions concerning FTP. The first question is whether FTP admits a polynomial
time algorithm. In this paper we give a negative answer to this question conditioned on P$\neq$NP, showing that FTP is NP-hard. 
In fact, the existence of constant-factor approximation algorithms is unlikely already for the restricted case of directed 
acyclic graphs. Consequently, it is natural to ask whether polynomial algorithms can be obtained for restricted variants of FTP. 
We devote the first part of this paper to this question.

The second question
concerns the natural fractional relaxation of FTP. As we previously observed, one natural relaxation of EDP is the
MCF problem. This relaxation admits an integrality gap of one, namely the optimal integral solution value is always equal to 
the corresponding optimal fractional value. In the context of FTP it is natural to study the following fractional relaxation,
which we denote by \textit{Fractional FTP} (FRAC-FTP).

\begin{framed}
\textbf{Instance: } An edge weighted graph $G=(V,E)$, two nodes $s,t\in V$, a subset of the edges $M\subset E$ and an integer
$k\in \mathbb{Z}_+$.

\textbf{Problem: } Find a minimum cost capacity vector $x: E \rightarrow [0,1]$ such that for every $F\in \Omega(M,k)$, the
maximum $s$-$t$ flow in $G_F = (V, E\setminus F)$, capacitated by $x$, is at least one. 
\end{framed}

It is obvious that by further requiring $x \in \{0,1\}^E$ we obtain the FTP problem, thus FRAC-FTP is indeed a fractional
relaxation of FTP. We later show that the integrality gap with respect to this relaxation is bounded by $k+1$. Furthermore, we show that
this bound is tight, namely that there exists an infinite family of instances with integrality gap arbitrarily close to $k+1$.
This result also leads to a simple $(k+1)$-approximation algorithm for FTP, which we later combine with an
algorithm for the case $k=1$ to obtain a $k$-approximation algorithm.

The remainder of the paper is organized as follows. In the following section we review some related work.
In Section~\ref{sec:complexity} we study the complexity of FTP.
In Section~\ref{sec:exact} we provide some exact polynomial algorithms for special cases of FTP. In
Section~\ref{sec:intgap} we relate FTP and FRAC-FTP by proving a tight bound on the interagrality gap and
show how this result leads to a $k$-approximation algorithm for FTP. In Section~\ref{sec:conclusions} we
conclude the paper and mention some interesting directions for future work.

\section{Related Work}\label{sec:related}

The shortest path problem was studied in numerous robust settings 
(\cite{RSP,CompRSP,DemandR,DemandR2,MinMaxRegretSP,RRSP,ExactSubgraphSP}). Yu and Yang~\cite{RSP}
studied the shortest path problem with a discrete scenario set in the minimax and minimax-regret models.
They show that both problems are NP-Hard even for $2$ scenarios and layered graphs.
Aissi et al.~\cite{MinMaxRegretSP} extended these results by showing that both version of the
problem admit an FPTAS if the number of scenarios is bounded. They also show that both problems are 
not approximable within $2-\epsilon$ for any $\epsilon > 0$ with an unbounded number of scenarios unless P$=$NP.
Adjiashvili et al.~\cite{Adjiashvili} studied a two-stage feasibility problem
in which for a given pair of terminals $s$ and $t$ in a graph $G=(V,E)$, and two parameters $f,r\in \mathbb{Z}_+$, the goal
is to choose a minimum-cardinality subset of the edges $S\subset E$ with the property that for every failure set
$F\subset E$ with $|F| \leq f$ there exists a recovery set $R\subset E \setminus F$ with $|R|\leq r$,
such that $s$ and $t$ are connected in $(S \setminus F) \cup R$. The authors give an exact algorithm for
$f=r=1$ and a $2$-approximation algorithm for $f=1$ and arbitrary $r$. Also some complexity results
are provided.
Puhl~\cite{RRSP} studied the recoverable-robust shortest path problem, a robust path problem in a
two-stage optimization setting. Hardness results were provided for discrete scenario sets, interval
scenarios and $\Gamma$-scenarios for two variants of cost function. For one variant an approximation 
algorithm is given.
B\"using~\cite{ExactSubgraphSP}
developed approximation algorithms and proved hardness results for the problem of finding the smallest 
number of edges in the graph which contains a shortest path according to every scenario.

The robustness model proposed in this paper is natural for various classical combinatorial optimization
problems. Of particular interest is the counterpart of the Minimum Spanning Tree problem. This problem
is closely related to the Minimum $k$-Edge Connected Spanning Subgraph (ECSS) problem, a well-understood
robust connection problem.
Gabow, Goemans, Tardos and Williamson~\cite{kEdgeConnected1}
developed a polynomial time $(1 + \frac{c}{k})$-approximation algorithm for ECSS, for some fixed
constant $c$. The authors also show that for some constant $c' < c$, the existence of a polynomial time
$(1 + \frac{c'}{k})$-approximation algorithm implies P$=$NP. An intriguing property of ECSS is
that the problem becomes easier to approximate when $k$ grows. Concretely, while for every fixed $k$,
ECSS is NP-hard to approximate within some factor $\alpha_k>1$, the latter result asserts that there 
is function $\beta(k)$ tending
to one as $k$ tends to infinity such that ECSS is approximable within a factor $\beta(k)$.
This phenomenon was already discovered by
Cheriyan and Thurimella~\cite{CheriyanThurimella}, who gave algorithms with a weaker approximation guarantee.
The more general Generalized Steiner Network problem admits a polynomial $2$-approximation algorithm due to
Jain~\cite{Jain}.

We review next some of the related research in robust combinatorial optimization. 
Liebchen et al.~\cite{RRFirst} proposed the notion of \emph{recoverable robustness} as a tractable
framework for robustness in railway optimization. The authors present
a general recipe for obtaining two-stage robust problems from any combinatorial optimization
problem. In~\cite{AdjRC} Ben-Tal et al.~proposed a method to delay some decisions in an
uncertain convex optimization problem to the stage at which the real data is revealed. Their
approach does not force some decision variables $X$ to be fixed as first-stage
decision variables. Instead, these variables are allowed to be associated in the first stage decision
with affine functions $X(\zeta)$ of the uncertain input vector $\zeta$. The resulting formulation
turns out to be theoretically and practically tractable in many cases. The framework of Bertsimas and
Caramanis~\cite{FiniteadAptability} allows to deal with discrete changes in the
second stage decision in the special case of linear optimization.

Kouvalis and Yu~\cite{RDOBook} provide an extensive study of complexity and algorithms for robust 
counterparts of many classical discrete optimization. Instead of resource removal, these models 
assume uncertainty in final cost of the resources.
Bertsimas and Sim~\cite{RDO} proposed a framework for incorporating robustness in discrete optimization. 
They show that robust counterparts of mixed integer linear programs with uncertainty in the matrix coefficients 
can be solved using moderately larger mixed integer linear programs with no uncertainty. 

A two-stage robust model called\textit{demand-robust optimization} was introduced by
Dhamdhere et al.~\cite{DemandR}. In this model a scenario corresponds to a subset
of the initially given constraints that need to be satisfied. For shortest
path this means that the source node and the target node are not fixed, but are rather
revealed in the second stage. The authors provide approximation 
algorithms for some problems such as Steiner Tree, Multi-Cut, facility Location etc.~in the
demand-robust setting. Golovin et al.~\cite{DemandR2} later improved some of those results,
including a constant factor approximation algorithm for the robust
shortest path problem. Feige et al.~\cite{ExpSce1} and Khandekar et al.~\cite{ExpSce2} studied
an extension of the demand-robust model that allows exponentially many scenarios.

\section{Complexity of FTP}\label{sec:complexity}

Our first observation is that FTP generalizes the \textit{Directed $m$-Steiner Tree Problem} ($m$-DST). The input
to $m$-DST is a weighted directed graph $G = (V,E)$, a \textit{source} node $s\in V$, a collections of \textit{terminals}
$T\subset V$ and an integer $m \leq |T|$. The goal is to find a minimum-cost arboresence $A \subset E$ rooted at $s$, that contains a 
directed path from $s$ to some subset of $m$ terminal. 

The $m$-DST is seen to be a special case of FTP as follows. Given an instance $I = (G, s,T,m)$ of $m$-DST 
define the following instance of FTP. The graph $G$ is augmented by $|T|$ new zero-cost directed edges $E'$ 
connecting every terminal $u\in T$ to a new node $t$. Finally, we set $M = E'$ and $k = m-1$. 
The goal is to find a fault-tolerant path from $s$ to $t$ in the new graph. It is now straightforward to see
that a solution $S$ to the FTP instance is feasible if and only if $S\cap E$ contains a feasible solution
to the $k$-DST problem (we can assume that all edges in $E'$ are in any solution to the FTP instance).

The latter observation implies an immediate conditional lower bound on the approximability of FTP. Halperin
and Krauthgamer~\cite{HalperinKrauthgamer} showed that $m$-DST cannot be approximated within a factor
$\log ^{2-\epsilon} m$ for every $\epsilon > 0$, unless NP $\subset$ ZTIME($n^{polylog(n)}$). As a result
we obtain the following.

\begin{proposition}\label{prop:complexity}
For every $\epsilon > 0$ FTP admits no polynomial approximation algorithms with ratio $\log ^{2-\epsilon} k$, 
unless NP $\subset$ ZTIME($n^{polylog(n)}$).
\end{proposition}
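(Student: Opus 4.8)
The plan is to combine the reduction from $m$-DST to FTP sketched just above the statement with the known hardness of $m$-DST due to Halperin and Krauthgamer. I would first make the reduction precise and observe that it is approximation-preserving in the right sense: given an instance $I = (G, s, T, m)$ of $m$-DST, we build the FTP instance $I'$ by adding the zero-cost arcs $E'$ from each terminal to a fresh sink $t$, setting $M = E'$ and $k = m-1$. Since the arcs in $E'$ have zero cost, we may assume every feasible FTP solution contains all of $E'$, and then $S$ is feasible for $I'$ if and only if $S \cap E$ contains a feasible $m$-DST arborescence for $I$; moreover the costs match exactly, so $\mathrm{OPT}(I') = \mathrm{OPT}(I)$, and any FTP solution of cost $c$ yields an $m$-DST solution of cost at most $c$.

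Next I would invoke the Halperin--Krauthgamer bound: $m$-DST cannot be approximated within $\log^{2-\epsilon} m$ for any $\epsilon > 0$ unless NP $\subseteq$ ZTIME($n^{\mathrm{polylog}(n)}$). A hypothetical polynomial $\rho$-approximation algorithm for FTP, run on $I'$, would produce a feasible FTP solution of cost at most $\rho \cdot \mathrm{OPT}(I') = \rho \cdot \mathrm{OPT}(I)$, from which we extract (in polynomial time) an $m$-DST solution of cost at most $\rho \cdot \mathrm{OPT}(I)$ — i.e. a $\rho$-approximation for $m$-DST. Since $k = m-1$, a ratio of $\log^{2-\epsilon} k$ for FTP translates into a ratio essentially $\log^{2-\epsilon}(m-1) \le \log^{2-\epsilon} m$ for $m$-DST, contradicting the assumed hardness. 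One should be slightly careful that the instance parameter appearing in the FTP ratio is $k$ and not the total input size; but since $k = m-1$, the bound $\log^{2-\epsilon} k$ is (for large $m$) below $\log^{2-\epsilon'} m$ for a slightly smaller $\epsilon'$, so the contradiction goes through after an innocuous adjustment of $\epsilon$.

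The only genuine subtlety — and the step I would treat most carefully — is the direction of the approximation transfer when $k$ appears in the ratio rather than $|T|$ or $n$. Concretely, I want to ensure that the Halperin--Krauthgamer lower bound, which is stated in terms of the number of terminals $m$, still bites when the claimed FTP guarantee is phrased as $\log^{2-\epsilon} k$ with $k = m - 1$. This is routine: for any $\epsilon > 0$ pick $\epsilon' \in (0,\epsilon)$; for all sufficiently large $m$ we have $\log^{2-\epsilon}(m-1) < \log^{2-\epsilon'} m$, so a polynomial $\log^{2-\epsilon} k$-approximation for FTP yields a polynomial $\log^{2-\epsilon'} m$-approximation for $m$-DST, contradicting~\cite{HalperinKrauthgamer}. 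Everything else — feasibility of the reduction, cost preservation, polynomiality of the transformation — is immediate from the construction already described, so the proof is short.
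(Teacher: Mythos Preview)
Your proposal is correct and follows exactly the approach the paper takes: the paper simply combines the $m$-DST $\to$ FTP reduction described immediately before the proposition with the Halperin--Krauthgamer inapproximability bound, and states the proposition as an immediate consequence without a separate proof environment. Your write-up is in fact more careful than the paper's, since you explicitly address the $k = m-1$ versus $m$ parameter mismatch via an $\epsilon$-adjustment, a point the paper leaves implicit.
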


\begin{remark}\label{rem:alg:mneart} 
The reduction above can be easily adapted to obtain a $k^\epsilon$-approximation algorithm for 
FTP for the special case that $M \subset \{e\in E: t\in e\}$ using the algorithm of Charikar et. al.~\cite{CharikarChekuriETC}.
\end{remark}

In fact, any approximation algorithm with factor $\rho(k)$ for FTP is an approximation algorithm with factor $\rho(m)$ 
for $m$-DST. The best known algorithm for the latter problem is due to Charikar et. al.~\cite{CharikarChekuriETC}.
Their result is an approximation scheme attaining the approximation factor of $m^\epsilon$ for every $\epsilon > 0$.

In light of the previous discussion, it seems that obtaining an approximation algorithm for FTP
with a poly-logarithmic approximation guarantee is a challenging task. In fact, although we have
no concrete result to support this possibility, such an algorithm might not exist. 
We end this discussion with another 
special case of FTP, which we denote by \textit{Simultaneous Directed $m$-Steiner Tree} ($m$-SDST). An 
input to $m$-SDST specifies two edge-weighted graphs $G_1 = (V,E_1,w_1)$ and $G_2=(V,E_2,w_2)$ on the same
set of vertices $V$, a source $s$, a set of terminals $T\subset V$ and an integer $m \leq |T|$. The goal
is to find a subset $U\subset T$ of $m$ terminals and two arboresences $S_1\subset E_1$ and $S_2\subset E_2$ connecting
$s$ to $U$ in the respective graphs, so as to minimize $w_1(S_1) + w_2(S_2)$. $m$-SDST is seen to be a special case
of FTP via the following reduction. Given an instance of $m$-SDST, construct a graph $G = (V',E)$ as follows. 
Take a disjoint union of $G_1$ and $G_2$, where 
the direction of every edge in $G_2$ is reversed. Connect every copy of a terminal $u\in T$ in $G_1$ to its 
corresponding copy in $G_2$ with an additional zero-cost edge $e_u$. Finally, set $M = \{e_u : u\in T\}$ and
$k = m-1$. A fault-tolerant path connecting the copy of $s$ in $G_1$ to the copy of $s$ in $G_2$ corresponds
to a feasible solution to the $m$-SDST instance with the same cost, and vise-verse.


\section{Polynomial Special Cases}\label{sec:exact}

This section is concerned with tractable restrictions of FTP. Concretely we give polynomial
algorithms for arbitrary graphs and $k=1$, directed acyclic graphs (DAGs) and fixed $k$ and
for Series-parallel graphs. We denote the problem FTP restricted to instances with some 
fixed $k$ by $k$-FTP.

\subsection{$1$-FTP}\label{subsec:kisone}

For simplicity of the exposition we assume that $G$ is undirected. For the adaptation
to directed graphs see Remark~\ref{rem:dadskisone}.
We start with the following useful structural lemma.

\begin{lemma}\label{lem:disjointpaths}
 Let $X^*$ be an optimal solution to FTP on the instance $(G,M,k)$. Let $H' = (V', X')$ be the 
graph obtained by contracting all edges of $X^* \setminus M$ in $H^* = (V, X^*)$. Let $s'$ and
$t'$ be the vertices in $V'$, to which $s$ and $t$ were contracted, respectively.
Then either $s' = t'$, or there are $k+1$ edge disjoint $s'$-$t'$ paths in $H'$.
\end{lemma}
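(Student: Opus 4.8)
The plan is to argue by contradiction. Suppose $s' \neq t'$ but $H'$ contains at most $k$ edge-disjoint $s'$-$t'$ paths. By Menger's theorem, there is an $s'$-$t'$ edge cut $C' \subseteq X'$ in $H'$ with $|C'| \leq k$. The key point is that every edge of $X'$ is, by construction of $H'$ as a contraction of $X^* \setminus M$, an edge of $X^* \cap M$; hence $C'$ corresponds to a set $F \subseteq X^* \cap M \subseteq M$ with $|F| \leq k$, so $F \in \Omega(M,k)$ is a legitimate failure scenario. I would then show that removing $F$ from $X^*$ disconnects $s$ from $t$ in $H^* = (V, X^*)$, contradicting feasibility of $X^*$.

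The main step is to translate the cut in the contracted graph back to a cut in $H^*$. Since contracting the edges of $X^* \setminus M$ is an operation that does not change connectivity relations ``witnessed through those edges,'' an $s$-$t$ path in $H^* \setminus F$ would project to an $s'$-$t'$ walk in $H' \setminus C'$ (each contracted edge either stays, if it's not in $F$ — but contracted edges are never in $M$ hence never in $F$ — or collapses to a vertex), and such a walk contains an $s'$-$t'$ path in $H' - C'$, contradicting that $C'$ is a cut. Conversely one should also note the trivial direction: if $s' = t'$, then $s$ and $t$ lie in the same connected component of $(V, X^* \setminus M)$, so they are connected in $S \setminus F$ for every $F \subseteq M$ automatically, which is consistent with (indeed, the degenerate case of) the statement.

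I expect the only real obstacle to be bookkeeping: making precise the identification between edges of $H'$ and edges of $X^* \cap M$ (a contraction of a set of edges disjoint from $M$ leaves the edges of $M$ in natural bijection with a subset of the edges of $H'$, possibly creating parallel edges or loops — loops arise exactly from edges of $M$ both of whose endpoints were contracted together, and these can be discarded as they never help connectivity), and verifying that the cut/path correspondence survives this identification in both directions. Once that correspondence is set up cleanly, the contradiction is immediate from Menger's theorem. No optimality of $X^*$ is actually needed for this lemma — only feasibility — so I would state the proof using just feasibility of $X^*$.
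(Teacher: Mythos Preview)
Your proposal is correct and follows essentially the same route as the paper: assume $s'\neq t'$, invoke Menger (the paper does this implicitly via ``if there was a cut of cardinality $k$ or less'') to get a small $s'$-$t'$ cut in $H'$, observe that all edges of $H'$ lie in $M$ so this cut is a valid failure scenario $F\in\Omega(M,k)$, and derive a contradiction with feasibility of $X^*$. Your write-up is more careful about the contraction bookkeeping than the paper's short proof, and your remark that only feasibility (not optimality) of $X^*$ is used is accurate and worth keeping.
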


\begin{proof}
If $X^* \subset E\setminus M$ then $s' = t'$. Assume this is not the case, namely that
$X^* \cap M \neq \emptyset$. Observe that $H'$ must contain at least $k+1$ edge
disjoint $s'$-$t'$ paths: indeed if there was a cut of cardinality $k$ or less in $H$,
this would also be a cut in $H^*$. Since $X' \subset M$, it corresponds to a failure scenario,
namely $X' \in \Omega(M,k)$.
\end{proof}

\begin{definition}\label{def:ercc:unionpaths}
An $s$-$t$ \textit{bipath} in the graph $G =(V,E)$ is a union of two $s$-$t$ paths $P_1, P_2 \subset E$, such that
for every two nodes $u,v$ incident to both $P_1$ and $P_2$, the order in which they appear on $P_1$ and $P_2$, respectively,
when traversed from $s$ to $t$ is the same.
\end{definition}


In the context of $1$-FTP, we call a bipath $Q=P_1 \cup P_2$ \textit{robust}, if it holds that $P_1 \cap P_2 \cap M = \emptyset$.
Note that every robust $s$-$t$ bipath $Q$ in $G$ is a feasible solution to the $1$-FTP instance.
Indeed, consider any failure edge $e\in M$. Since $e\not\in P_1\cap P_2$ it holds that either $P_1 \subset Q-e$,
or $P_2 \subset Q-e$. It follows that $Q-e$ contains some $s$-$t$ path.
The next lemma shows that every feasible solution of the $1$-FTP instance contains a robust $s$-$t$ bipath.

\begin{lemma}\label{lem:robustbipath_contained}
Every feasible solution $S^*$ to the $1$-FTP contains a robust $s$-$t$ bipath.
\end{lemma}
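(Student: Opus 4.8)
The plan is to build two $s$-$t$ paths $P_1,P_2\subseteq S^*$ which together form a robust bipath, reading them off the block decomposition of $S^*$. Let $C$ be the connected component of $(V,S^*)$ containing $s$; since $S^*$ is feasible for $1$-FTP, we have $t\in C$, the graph $C$ contains an $s$-$t$ path, and every $s$-$t$ path of $(V,S^*)$ lies inside $C$. Consider the block-cut tree of $C$. Because $s,t\in C$, there is a well-defined chain of distinct vertices $s=v_0,v_1,\dots,v_r=t$ (namely $s$, $t$, and the cut vertices of $C$ that separate $s$ from $t$, in the order in which every $s$-$t$ path meets them) and a matching chain of blocks $B_1,\dots,B_r$ with $v_{l-1},v_l\in B_l$, such that every $s$-$t$ path of $C$ splits into a $v_{l-1}$-$v_l$ subpath inside $B_l$ for $l=1,\dots,r$.

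Feasibility enters at exactly one point. If some $B_l$ is a single edge $e_l=v_{l-1}v_l$ (equivalently, $e_l$ is a bridge of $C$), then $e_l$ lies on every $s$-$t$ path of $(V,S^*)$, so $S^*\setminus\{e_l\}$ contains no $s$-$t$ path; since $\{e_l\}\in\Omega(M,1)$, feasibility forces $e_l\notin M$. I would then choose, for each $l$, two $v_{l-1}$-$v_l$ paths $R_l^1,R_l^2\subseteq B_l$: if $B_l$ is such a bridge block, take $R_l^1=R_l^2=\{e_l\}$; otherwise $B_l$ is $2$-connected, so $v_{l-1}$ and $v_l$ lie on a common cycle of $B_l$, whose two arcs give internally vertex-disjoint (hence edge-disjoint) $v_{l-1}$-$v_l$ paths $R_l^1,R_l^2$ meeting only at $v_{l-1},v_l$. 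Setting $P_i=R_1^i\cup\dots\cup R_r^i$ for $i=1,2$, each $P_i$ is a simple $s$-$t$ path (consecutive pieces meet only at $v_l$, and distinct blocks of $C$ share no other vertex), and $P_1,P_2\subseteq B_1\cup\dots\cup B_r\subseteq S^*$.

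Then I would check robustness and the bipath condition. A common edge of $P_1$ and $P_2$ must lie in a single block $B_l$, since distinct blocks are edge-disjoint; there $P_i\cap E(B_l)=R_l^i$, and for $2$-connected $B_l$ these are edge-disjoint, so the only edges of $P_1\cap P_2$ come from bridge blocks, for which $e_l\notin M$; hence $P_1\cap P_2\cap M=\emptyset$. For the bipath condition, a vertex common to $P_1$ and $P_2$ other than some $v_l$ would be an interior vertex of a single block $B_l$ lying on both $R_l^1$ and $R_l^2$, contradicting that these share only $v_{l-1},v_l$; so the vertices common to $P_1$ and $P_2$ are exactly $v_0,\dots,v_r$, and they occur along both paths in the same order $v_0,v_1,\dots,v_r$.

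The delicate step is the first one: extracting from the block-cut tree of $C$ the linearly ordered chain $B_1,\dots,B_r$ with distinct endpoints $v_{l-1}\neq v_l$, together with the fact that every $s$-$t$ path of $(V,S^*)$ refines into one $v_{l-1}$-$v_l$ piece per block. This is a standard structural property of blocks and cut vertices, and once it is in place the remainder is bookkeeping. (Lemma~\ref{lem:disjointpaths} already yields, for $k=1$, two edge-disjoint paths after contracting the non-faulty edges — essentially the same content — but the block decomposition has the advantage of supplying for free the vertex ordering that the bipath property requires.)
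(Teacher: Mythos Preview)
Your argument is correct and follows the same idea as the paper's proof: decompose the $s$--$t$ connection in $S^*$ along its cut structure, note that any bridge separating $s$ from $t$ must lie outside $M$ by feasibility, and between consecutive cut points splice together two disjoint paths. The paper does this more tersely by fixing one $s$--$t$ path $P$, listing the endpoints $u_1,\dots,u_r$ of the bridges along $P$, and asserting two \emph{edge}-disjoint $u_i$--$u_{i+1}$ paths whenever $u_iu_{i+1}$ is not itself a bridge; it then declares the union a robust bipath without verifying the vertex-ordering condition in Definition~\ref{def:ercc:unionpaths}.

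Your use of the block--cut tree buys you something the paper's sketch does not: inside each $2$-connected block you obtain internally \emph{vertex}-disjoint $v_{l-1}$--$v_l$ paths, so the vertices common to $P_1$ and $P_2$ are exactly $v_0,\dots,v_r$, and the bipath ordering is immediate. With only edge-disjoint pieces (as in the paper) the two paths could share interior vertices in an a~priori uncontrolled order, and a separate argument would be needed. You also avoid the paper's appeal to inclusion-minimality of $S^*$, which is harmless but unnecessary for the statement as written. The one point you flag as delicate---extracting the linear chain of blocks and cut vertices between $s$ and $t$---is indeed standard, and once granted your proof is more complete than the paper's.
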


\begin{proof}
We assume without loss of generality that $S^*$ is a minimal feasible solution with respect
to inclusion. Let $Y \subset S^*$ be the set of bridges in $(V, S^*)$. From feasibility of 
$S^*$, we have $Y \cap M = \emptyset$. Consider any $s$-$t$ path $P$ in $S^*$. Let $u_1, \cdots, u_r$ be
be the set of vertices incident to $Y = P\cap Y$. Let $u_i$ and $u_{i+1}$ be such that $u_iu_{i+1} \not\in Y$.
(if such an edge does not exist, we have $Y = P$, which means that $P$ is a robust $s$-$t$ bipath).
Note that $S^*$ must contain two edge-disjoint $u_i$-$u_{i+1}$ paths $L_1, L_2$. Taking as
the set $Y$ together with all such pairs of paths $L_1, L_2$ results in a robust bipath.
\end{proof}

We can conclude from the previous discussion and Lemma~\ref{lem:robustbipath_contained} that all
minimal feasible solutions to the $1$-FTP instance are robust bipaths. This observations leads
to the simple algorithm, which is formally given as Algorithm~\ref{alg:kisone}. We prove its
correctness in the following theorem.

\begin{theorem}\label{thm:kisone}
There is a polynomial algorithm for $1$-FTP.
\end{theorem}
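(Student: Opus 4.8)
The plan is to turn the structural characterization just established — namely that every minimal feasible solution to $1$-FTP is a robust $s$-$t$ bipath (Lemma~\ref{lem:robustbipath_contained}, together with the converse observation that every robust bipath is feasible) — into a minimum-cost computation that can be carried out by a single min-cost flow computation. The high-level idea is that a robust bipath is essentially a pair of $s$-$t$ paths that may share edges \emph{outside} $M$ but must be edge-disjoint \emph{inside} $M$; so I want to compute a minimum-cost $2$-flow from $s$ to $t$ in an auxiliary network where the secure edges (those in $E\setminus M$) have capacity $2$ and the faulty edges (those in $M$) have capacity $1$, with the per-unit cost of each edge equal to its weight.

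First I would set up the auxiliary digraph $D$: replace each undirected edge $uv$ by the usual two opposite arcs (or, in the directed adaptation promised in Remark~\ref{rem:dadskisone}, keep the arcs as they are), give every arc coming from an edge of $E\setminus M$ capacity $2$ and every arc from an edge of $M$ capacity $1$, and set the cost of routing one unit along an arc equal to the weight of the underlying edge. Then I would compute a minimum-cost integral $s$-$t$ flow of value $2$ in $D$; integrality of min-cost flow guarantees an integral optimal solution, and its support, read back in $G$, is a subgraph in which $s$ and $t$ are $2$-edge-connected and no edge of $M$ is used "twice". The cost of this flow is at most the weight of an optimal robust bipath (since a robust bipath, by its definition via Definition~\ref{def:ercc:unionpaths} and the robustness condition $P_1\cap P_2\cap M=\emptyset$, yields a feasible flow of value $2$ of exactly that cost after cancelling opposite flow on shared reversed arcs), and conversely I must argue that the support of the optimal flow contains a robust bipath whose weight is no larger than the flow cost — this follows by a flow-decomposition argument, decomposing the value-$2$ flow into two $s$-$t$ paths (plus possibly circulations, which only add cost and can be discarded), with the capacity-$1$ constraint on $M$-arcs forcing those two paths to be edge-disjoint on $M$, hence robust; the planarity-of-orders condition in Definition~\ref{def:ercc:unionpaths} can be arranged by uncrossing, which does not change the edge set. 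Combining the two inequalities with the fact (from Lemma~\ref{lem:robustbipath_contained} and the feasibility of robust bipaths) that the optimal $1$-FTP value equals the minimum weight of a robust $s$-$t$ bipath gives optimality, and min-cost flow runs in polynomial time, proving the theorem.

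The main obstacle I expect is the careful bookkeeping in the equivalence between robust bipaths and integral value-$2$ flows in $D$, specifically handling the undirected case where a min-cost flow may want to send flow in opposite directions along the two arcs of one edge: one must show such cancellation never hurts (it only decreases cost if weights are nonnegative, and it cannot destroy the value-$2$ requirement after re-routing), and that after cancellation the residual flow decomposes into two genuinely edge-disjoint-on-$M$ paths. A secondary subtlety is reconciling the "bipath" order condition of Definition~\ref{def:ercc:unionpaths} with an arbitrary flow decomposition — I would dispatch this by noting that the order condition constrains only how we \emph{name} the two paths, not the underlying edge set that constitutes the feasible solution, so for the purpose of matching costs it is harmless; alternatively one invokes a standard uncrossing argument to put any pair of $s$-$t$ paths into bipath form without increasing total weight. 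With these points handled, correctness and polynomiality of Algorithm~\ref{alg:kisone} follow.
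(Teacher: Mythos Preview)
Your reduction to a single min-cost $2$-flow is appealing, but it does \emph{not} compute the optimal $1$-FTP value. The flaw is in the sentence ``a robust bipath \ldots\ yields a feasible flow of value~$2$ of exactly that cost'': when $P_1$ and $P_2$ share a secure edge $e\in E\setminus M$, the bipath cost $w(P_1\cup P_2)$ counts $w_e$ once, whereas the flow that routes one unit along each $P_i$ pays $2w_e$ on that edge. There is no cancellation here --- both paths traverse $e$ in the \emph{same} direction --- so the flow cost can be as large as $2\,w(P_1\cup P_2)$, not equal to it. Concretely, take $s,t$ joined by one secure edge $e$ of weight $1.5$ and two parallel faulty edges $f_1,f_2$ of weight $1$ each. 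The optimal $1$-FTP solution is $\{e\}$ with cost $1.5$, but in your network the min-cost $2$-flow sends one unit on $f_1$ and one on $f_2$ for cost $2$ (sending two units through $e$ costs $3$), and its support $\{f_1,f_2\}$ has weight $2>1.5$. So your algorithm is not optimal; at best the argument you sketch gives a $2$-approximation.

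The paper's proof avoids exactly this double-counting by \emph{not} reducing to one flow computation. Instead it builds an auxiliary complete graph on $V$ with edge lengths $\ell(u,v)=\min\{\ell_1(u,v),\ell_2(u,v)\}$, where $\ell_1(u,v)$ is the shortest $u$-$v$ path in $(V,E\setminus M)$ and $\ell_2(u,v)$ is the minimum cost of two edge-disjoint $u$-$v$ paths in $G$; a shortest $s$-$t$ path in this auxiliary graph then corresponds to an optimal robust bipath. The point is that a robust bipath decomposes into maximal segments that are either (i) a single secure path shared by $P_1$ and $P_2$, or (ii) a pair of edge-disjoint subpaths, and the $\min$ in the definition of $\ell$ lets the algorithm price each segment correctly --- paying once, not twice, for a shared secure segment. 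This per-segment choice is precisely what a single min-cost flow with per-unit costs cannot express.
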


\begin{proof}
To solve $1$-FTP we need to find the minimum cost robust $s$-$t$ bipath. To this end 
let us define two length functions $\ell_1, \ell_2 : V\times V \rightarrow \mathbb{R}_+$. For two
vertices $u,v\in V$ let $\ell_1(u,v)$ denote the shortest path distance from $u$ to $v$ in
the graph $(V,E\setminus M)$, and let $\ell_2(u,v)$ denote the cost of the shortest pair of edge-disjoint
$u$-$v$ paths in $G$. 
Clearly, both length functions can be computed in polynomial time (e.g. using
flow techniques). Finally, set $\ell(u,v) = \min\{\ell_1(u,v), \ell_2(u,v)\}$. 
Construct the complete graph on the vertex set $V$ and associate the length function $\ell$ with it. 
Observe that by definition of $\ell$, any $s$-$t$ path in this graph corresponds to a robust
$s$-$t$ bipath with the same cost, and vice versa. It remains to find the shortest $s$-$t$ bipath
by performing a single shortest $s$-$t$ path in the new graph. For every edge $uv$ in this 
shortest path, the optimal bipath contains the shortest $u$-$v$ path in $(V,E\setminus M)$ if
$\ell(u,v) = \ell_1(u,v)$, and the shortest pair of $u$-$v$ paths in $G$, otherwise.
\end{proof}

\renewcommand{\algorithmicrequire}{\textbf{Input:}}
\renewcommand{\algorithmicensure}{\textbf{Output:}}

\begin{algorithm}
\caption{\textbf{: FindBipath($G, s, t, M$)}}\label{alg:kisone}
\begin{algorithmic}[1]
\REQUIRE{$G=(V,E)$ a directed graph, $s,t\in V$ and $M\subset E$.}
\ENSURE{A minimum-cost $s$-$t$ bipath $Q$.}
\STATE{$Q \gets \emptyset.$}
\STATE{$G_{SAFE} \gets (V,E\setminus M).$}
\FORALL{$u,v \in V\times V$}
  \STATE{Compute $\ell_1(u,v)$, the shortest $u$-$v$ path distance in $G_{SAFE}$.}
  \STATE{Compute $\ell_2(u,v)$, the cost of the minimum $u$-$v$ $2$-flow in $G$ with unit capacities.}
  \STATE{$\ell(u,v) \gets \min\{\ell_1(u,v), \ell_2(u,v)\}$.}
\ENDFOR
\STATE{Compute the shortest $s$-$t$ path $P$ in $(V, V\times V)$ and length function $\ell$.}
\FORALL{$uv \in P$}
  \IF{$\ell(u,v) = \ell_1(u,v)$}
    \STATE{Add to $Q$ a shortest $u$-$v$ path in $G_{SAFE}$.}
  \ELSE 
    \STATE{Add to $Q$ a shortest edge-disjoint pair of $u$-$v$ paths in $G$.}
  \ENDIF
\ENDFOR
\RETURN{$Q.$}
\end{algorithmic}
\end{algorithm}

\begin{remark}\label{rem:dadskisone}
Let us remark that Algorithm~\ref{alg:kisone} easily extends to work on directed graphs as follows. 
Lemma~\ref{lem:disjointpaths} is replaced with the more general statement that the minimum
$s$-$t$ flow in the graph $(V, X^*)$ capacitated by the vector $c_e = 1$ if $e\in M$ and $c_e = \infty$,
otherwise is at least $k+1$. The rest of the analysis remains the same. 
\end{remark}

\subsection{$k$-FTP and Directed Acyclic Graphs}\label{subsec:dagfixedk}

Let us first consider the case of a layered graph. The generalization to a
directed acyclic graph is done via a standard transformation, which we describe later.
Recall that a layered graph 
$G = (V, E)$ is a graph with a partitioned vertex set $V = V_1 \cup \cdots\cup V_r$
and a set of edges satisfying $E \subset \bigcup_{i\in [r-1]} V_i \times V_{i+1}$.
We assume without loss of generality that $V_1 = \{s\}$ and $V_{r} = \{t\}$. For every $i\in [r-1]$ 
we let $E_i = E \cap V_i \times V_{i+1}$.

Analogously to the algorithm in the previous section, we reduce $k$-FTP to a shortest
path problem in a larger graph. The following definition sets the stage for the
algorithm.

\begin{definition}\label{def:conf}
 An \emph{$i$-configuration} is a vector $d \in \{0,1,\cdots,k+1\}^{V_i}$ 
satisfying $\sum_{v\in V_i} d_v = k+1$. We let $supp(d) = \{v\in V_i : d_v > 0\}$.
For an $i$-configuration $d^1$ and an $(i+1)$-configuration $d^2$ we let 
$$
V(d^1, d^2) = supp(d^1) \cup supp(d^2) \,\, \text{ and } \,\, E(d^1, d^2) = E[V(d^1, d^2)].
$$

We say that an $i$-configuration $d^1$ \emph{precedes} an $(i+1)$-configuration $d^2$ if the 
following flow problem is feasible. The graph is defined as $H(d^1, d^2) = (V(d^1, d^2), E(d^1, d^2))$.
The demand vector $\nu$ and the capacity vector $c$ are given by 
\begin{equation*}
\nu_u = 
\begin{cases} 
-d^1_u    & \mbox{if }  u\in supp(d^1)\\ 
d^2_u      & \mbox{if }  u\in supp(d^2) 
\end{cases} \,\, \text{and} \,\,\,\,
c_e = 
\begin{cases} 
1    	    & \mbox{if }  e\in M\\ 
\infty      & \mbox{if } e\in E\setminus M,
\end{cases}
\end{equation*}
respectively. If $d^1$ precedes $d^2$ we say that the \emph{link} $(d^1,d^2)$ exists. Finally, the \emph{cost}
$\ell(d^1,d^2)$ of this link is set to be minimum value $w(E')$ over all $E'\subset E(d^1, d^2)$,
for which the previous flow problem is feasible, when restricted to the set of edges $E'$.
\end{definition}

The algorithm constructs a layered graph $\mathcal{H} = (\mathcal{V}, \mathcal{E})$ with $r$ layers 
$\mathcal{V}_1, \cdots, \mathcal{V}_r$. For every $i\in[r]$ the set of 
vertices $\mathcal{V}_i$ contains all $i$-configurations. Observe that $\mathcal{V}_1$ and $\mathcal{V}_r$ 
contain one vertex each, denoted by $c^s$ and $c^t$, respectively. The edges correspond to links between
configurations. Every edge is directed from the configuration with the lower index to the one with the higher index.
The cost is set according to Definition~\ref{def:conf}.
The following lemma provides the required observation, which immediately leads to a polynomial algorithm.

\begin{lemma}
Every $c^s$-$c^t$ path $P$ in $H$ corresponds to a fault-tolerant path $S$ with $w(S) \leq \ell(P)$, and vise-versa.
\end{lemma}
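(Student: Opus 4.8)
The plan is to establish the two directions of the correspondence separately. First I would show that every $c^s$-$c^t$ path $P$ in $\mathcal H$ yields a fault-tolerant path $S$ with $w(S)\le \ell(P)$. Given $P = (c^s = d^1, d^2, \dots, d^r = c^t)$, for each link $(d^i, d^{i+1})$ on $P$ let $E'_i \subset E(d^i, d^{i+1})$ be a minimum-cost edge set witnessing $\ell(d^i, d^{i+1})$, i.e.\ one on which the flow problem of Definition~\ref{def:conf} is feasible. Set $S = \bigcup_{i\in[r-1]} E'_i$; clearly $w(S) \le \sum_i \ell(d^i,d^{i+1}) = \ell(P)$. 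To verify feasibility of $S$ for the FTP instance, I would fix an arbitrary failure set $F \in \Omega(M,k)$, $|F|\le k$, and exhibit an $s$-$t$ path in $S\setminus F$. The key observation is that each feasible flow on $E'_i$ routes $k+1$ units from the multiset of sources given by $d^i$ to the sinks given by $d^{i+1}$, where every edge in $M$ has capacity $1$ and every safe edge has infinite capacity; so after removing the at most $k$ failed edges (each killing at most one unit of flow across the corresponding layer), at least one unit survives in each layer. Chaining these surviving units across all layers — here I would invoke a flow-decomposition / path-concatenation argument, using that the layered structure forces flow to pass from $V_i$ to $V_{i+1}$ exactly once — produces an $s$-$t$ walk, hence an $s$-$t$ path, in $S\setminus F$.

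For the converse, I would take a minimal feasible solution $S^*$ to the $k$-FTP instance with $w(S^*)$ equal to the optimum, and build a $c^s$-$c^t$ path $P$ with $\ell(P) \le w(S^*)$. The natural definition is: for each layer $i$, let $d^i_v$ be the amount of $s$-$t$ flow through vertex $v \in V_i$ in a maximum $s$-$t$ flow in $(V, S^*)$ under the capacities $c_e = 1$ for $e\in M$ and $c_e = \infty$ otherwise. By the directed analogue of Lemma~\ref{lem:disjointpaths} (cf.\ Remark~\ref{rem:dadskisone}), feasibility of $S^*$ forces this max-flow value to be at least $k+1$, and I would argue it can be taken to be exactly $k+1$ by restricting attention to a $(k+1)$-unit sub-flow; then each $d^i$ is an $i$-configuration, $d^1 = c^s$, $d^r = c^t$, and the restriction of this flow to the edges between $V_i$ and $V_{i+1}$ witnesses that $d^i$ precedes $d^{i+1}$ using only edges of $S^*$. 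Hence every link $(d^i,d^{i+1})$ exists with cost $\ell(d^i,d^{i+1}) \le w(S^* \cap E_i)$, and summing over $i$ gives $\ell(P) \le w(S^*)$, completing the correspondence. Combined with the first direction this shows the optimum of $k$-FTP equals the shortest $c^s$-$c^t$ path length in $\mathcal H$.

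The main obstacle I anticipate is the feasibility verification in the first direction: showing that surviving one flow unit per layer actually assembles into a single connected $s$-$t$ path, rather than into disconnected pieces. The subtlety is that the witnesses $E'_i$ for different layers are chosen independently, so a unit of flow entering vertex $v\in V_{i+1}$ from layer $i$ need not be continued by the flow on $E'_{i+1}$ unless $d^{i+1}_v > 0$ — which is exactly guaranteed because both flow problems use the same configuration $d^{i+1}$ as their sink/source profile. I would make this precise by an inductive argument on the layers: maintain the invariant that after processing layers $1,\dots,i$, there is a surviving path from $s$ to some vertex $v\in \mathrm{supp}(d^{i})$ within $(\bigcup_{j<i} E'_j)\setminus F$; the inductive step uses that the layer-$i$ flow on $E'_i\setminus F$ still pushes at least one unit out of $v$ (since $v$ had positive outflow $d^i_v \ge 1$ and at most $k$ units total were destroyed across the whole instance, but a cleaner accounting is needed per layer). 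Handling the per-layer versus global failure budget correctly — a failed edge in $M\cap E_i$ only reduces the layer-$i$ flow, and the $k+1$ disjoint units in that layer are individually cut by at most one such edge — is the delicate bookkeeping step, but it is exactly the content of the $k+1$ lower bound on the cut, so it goes through.
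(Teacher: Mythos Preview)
Your converse direction (fault-tolerant path $S^* \Rightarrow$ $c^s$-$c^t$ path $P$ with $\ell(P)\le w(S^*)$) is essentially the paper's argument: take a $(k+1)$-flow in $(V,S^*)$ with capacities $1$ on $M$ and $\infty$ elsewhere, read off an $i$-configuration at each layer from the flow passing through $V_i$, and note that the restriction of the flow to $E_i\cap S^*$ witnesses each link at cost at most $w(S^*\cap E_i)$. The paper phrases this via a path decomposition of the flow, but the content is the same.

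The forward direction (path $P \Rightarrow$ fault-tolerant $S$ with $w(S)\le \ell(P)$) is where your plan diverges, and the specific inductive argument you sketch has a genuine gap. Your invariant is that some single vertex $v\in\mathrm{supp}(d^i)$ is reachable from $s$ in $(\bigcup_{j<i}E'_j)\setminus F$, and you want the layer-$i$ flow on $E'_i\setminus F$ to push a unit out of \emph{that particular} $v$. But this need not hold: if $d^i_v=1$ and the single $M$-edge carrying that unit lies in $F$, you are stuck at $v$ even though some other vertex of $\mathrm{supp}(d^i)$ may still have a surviving outgoing edge. Concretely, with $k=1$, configurations $(a{:}1,b{:}1)$ and $(c{:}1,d{:}1)$, edges $ac,bd\in M$, and $F=\{ac\}$: if the induction handed you $v=a$ you cannot continue, although the global path $s$-$b$-$d$-$t$ survives. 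Tracking one vertex is not enough; the per-layer ``one unit survives'' statements do not chain.

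The paper avoids this entirely. It simply takes $R=\bigcup_i R_i$ (your $\bigcup_i E'_i$), observes that concatenating the per-layer flows yields a \emph{global} integral $(k+1)$-flow from $s$ to $t$ in $(V,R)$ with capacities $1$ on $M$ and $\infty$ off $M$, and then invokes (the converse of) Lemma~\ref{lem:disjointpaths}: any edge set supporting such a flow is already a feasible FTP solution, because every $s$-$t$ cut has $M$-capacity at least $k+1$, so deleting any $k$ edges of $M$ leaves a cut of value at least $1$. The disjointness of the $R_i$ (layered structure) gives $w(R)=\sum_i\ell(d^i,d^{i+1})=\ell(P)$. This global min-cut argument is exactly the ``$k+1$ lower bound on the cut'' you allude to in your last sentence; the fix to your proof is to drop the layer-by-layer induction and appeal to it directly.
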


\begin{proof}
Consider first a fault-tolerant path $S \subset E$. We construct a corresponding $c^s$-$c^t$ path in $H$
as follows. Consider any $k+1$ $s$-$t$ flow $f^S$, induced by $S$. Let $p^1, \cdots, p^l$ be a path 
decomposition of $f^S$ and let $1 \leq \rho_1, \cdots, \rho_l \leq k+1$ (with $\sum_{i\in [l]} \rho_i = k+1$) be the
corresponding flow values.

Since $G$ is layered, the path $p^j$ contains exactly one vertex $v^j_i$ from $V_i$ and one edge $e^j_i$ 
from $E_i$ for every $j\in [l]$ and $i\in [r]$. For every $i\in[r]$ define the $i$-configuration $d^i$ with
$$
d^i_v = \sum_{j\in[l] : v = v^i_j} \rho_i,
$$
if some path $p^j$ contains $v$, and $d^i_v = 0$, otherwise. The fact that $d^i$ is an $i$-configuration follows
immediately from the fact that $f^S$ is a $(k+1)$-flow. In addition, for the same reason 
$d^i$ precedes $d_{i+1}$ for every $i\in [r-1]$. From the latter observations and the fact that $d^1 = c^s$
and $d^r =  c^t$ it follows that $P = d^1, d^2, \cdots, d^r$ is a $c^s-c^t$ path in $H$ with cost $\ell(P) \leq w(S)$.

Consider next an $c^s-c^t$ path $P = d^1, \cdots, d^r$ with cost $\ell(P) = \sum_{i=1}^{r-1} \ell(d^i, d^{i+1})$.
The cost $\ell(d^i, d^{i+1})$ is realized by some set of edges $R_i \subset E(d^i, d^{i+1})$ for every $i\in [r-1]$.
From Definition~\ref{def:conf}, the maximal $s$-$t$ flow in the graph $G' = (V, R)$ is at least $k+1$, where 
$R = \cup_{i\in [r-1]} R_i$. Next, Lemma~\ref{lem:disjointpaths} guarantees that there exists some feasible solution
$S \subset R$, the cost of which is at most $\ell(P)$. In the latter claim we used the disjointness of the sets 
$R_i$, which is due the layered structure of the graph $G$. This concludes the proof of the lemma.
\end{proof}

Finally, we observe that the number of configurations is bounded by $O(n^{k+1})$, which implies that
$k$-FTP can be solved in polynomial time on layered graphs. 

To obtain the same result for directed
acyclic graphs we perform the following transformation of the graph. Let $v_1, \cdots, v_n$
be a topological sorting of the vertices in $G$. Replace every edge $e = v_iv_j$ ($i<j$) with a path 
$p_e = v_i, u^e_{i+1}, \cdots, u^e_{j-1}, v_j$ of length $j-i+1$ by subdividing it sufficiently many times. Set
the cost of the first edge on the path to $w'(v_iu^e_{i+1}) = w(v_iv_j)$ and set the costs of all other edges on the path
to zero. In addition, create a new set of faulty edges $M'$, which contains all edges in a path $p_e$ if $e\in M$.

It is straightforward to see that the new instance of FTP is equivalent to the original one, while the obtained
graph after the transformation is layered. We summarize with the following theorem.

\begin{theorem}\label{thm:dagpoly}
There is a polynomial algorithm for $k$-FTP restricted to instances with a directed acyclic graph.
\end{theorem}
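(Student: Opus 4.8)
The plan is to reduce the directed-acyclic case to the layered case solved above, using the edge-subdivision transformation described just before the theorem, and then invoke the polynomial algorithm for layered graphs. Concretely, I would first apply a standard preprocessing step: discard every vertex that does not lie on some $s$-$t$ path. In the trimmed DAG, $s$ is the unique source and $t$ the unique sink (every other vertex retains an incoming and an outgoing edge, and a DAG-edge into $s$ could only originate from a vertex unreachable from $s$), so a topological order can be chosen with $s = v_1$ and $t = v_n$. Then replace each edge $e = v_iv_j$ ($i<j$) by a fresh internally-disjoint path $p_e$ through new vertices $u^e_{i+1},\dots,u^e_{j-1}$, assigning weight $w(e)$ to the first edge of $p_e$ and zero to the remaining edges, and placing all edges of $p_e$ into $M'$ if and only if $e\in M$. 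Putting $v_i$ and every $u^e_i$ into layer $i$ makes the new graph layered with $n$ layers and $V_1=\{s\}$, $V_n=\{t\}$; it has $O(n^2)$ vertices and $O(n\,|E|)$ edges, so the transformation has polynomial size.

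The second step is to check that the two FTP instances are equivalent, so that running the layered-graph algorithm on the transformed instance solves the original one. For the forward direction, given a feasible $S\subseteq E$, take $S' = \bigcup_{e\in S} p_e$; then $w'(S') = w(S)$, and for any $F'\subseteq M'$ with $|F'|\le k$ the set $F = \{e\in M : p_e\cap F'\neq\emptyset\}$ satisfies $F\subseteq M$ and $|F|\le|F'|\le k$, so $S\setminus F$ contains an $s$-$t$ path $P$ and $\bigcup_{e\in P} p_e$ is an $s$-$t$ path contained in $S'\setminus F'$. For the converse, take a feasible $S'$ that is minimal with respect to inclusion; since every interior vertex $u^e_i$ has in-degree and out-degree one in the transformed graph, minimality forces $S'$ to be a union of complete subdivision paths, so $S := \{e\in E : p_e\subseteq S'\}$ has $w(S)=w'(S')$. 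Given $F\subseteq M$ with $|F|\le k$, let $F'$ consist of the first edge of $p_e$ for each $e\in F$; these edges lie in $M'$ and $|F'|=|F|\le k$, so $S'\setminus F'$ contains an $s$-$t$ path $P'$, which uses only complete subdivision paths and hence contracts to an $s$-$t$ path $P\subseteq S\setminus F$. Thus the optimal values coincide, and an optimum of the transformed instance maps back to an optimum of the original in polynomial time.

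Finally, I would combine this with the layered-graph result established above: the configuration-graph construction together with the correspondence lemma proved in the previous subsection solves $k$-FTP on a layered graph in time polynomial in the graph size for fixed $k$, since the number of $i$-configurations is $O(N^{k+1})$ in the number of vertices $N$, each link cost is a single min-cost flow computation, and the dominating step is one shortest-path computation in $\mathcal H$. Applied to the $O(n^2)$-vertex layered graph produced by the transformation, this yields a polynomial-time algorithm for $k$-FTP on DAGs. The only point that needs genuine care — the \emph{main obstacle}, such as it is — is the instance equivalence, and in particular the fact that a failure in the transformed instance may strike a single interior edge of a subdivision path rather than the whole path; the degree-one structure of the interior vertices is exactly what makes this harmless, both when mapping a failure set forward and when recovering one backward.
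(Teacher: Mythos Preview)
Your proposal is correct and follows essentially the same approach as the paper: reduce a DAG to a layered graph via the edge-subdivision transformation and then invoke the configuration-graph algorithm for layered graphs. The paper simply asserts that the transformed instance is ``equivalent to the original one,'' whereas you supply the forward and backward feasibility arguments explicitly and add the harmless preprocessing step that trims vertices off every $s$--$t$ path so that $s$ and $t$ sit at the ends of the topological order; both additions are sound and fill in details the paper omits.
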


\subsection{Series-Parallel Graphs}\label{subsec:srp}

Recall that a graph is called \emph{series-parallel (SRP)} with terminal $s$ and $t$ if 
it can be composed from a collection of disjoint edges using the \emph{series} and \emph{parallel} compositions.
The series composition of two SRP graphs with terminals $s$, $t$ and $s',t'$ respectively, takes
the disjoint union of the two graphs, and identifies $t$ with $s'$. The parallel composition 
takes the disjoint union of the two graphs and identifies $s$ with $s'$ and $t$ with $t'$.
Given a SRP graph it is easy to obtain the aforementioned decomposition.

The algorithm we present has linear running time whenever the robustness parameter $k$ is fixed.
The algorithm is given as Algorithm~\ref{alg:ercc:sircc:srp}. 
In fact, the algorithms computes the optimal solutions $S_{k'}$ for all parameters
$0 \leq k' \leq k$. The symbol $\perp$ is returned if the problem is infeasible.

\begin{algorithm}
\caption{\textbf{: FTP-SeriesParallel($G,s,t,k$)}}\label{alg:ercc:sircc:srp}
\begin{algorithmic}[1]
\REQUIRE{$G=(V,E)$ a series-parallel graph, $s,t\in V$ and $M\subset E$, $k\in \mathbb{Z}_+$.}
\ENSURE{Optimal solution to FTP for parameters $0,1,\cdots, k$.}
\IF{$E = \{e\} \wedge e\in M$}
  \STATE Return $(\{e\}, \perp, \cdots, \perp)$
\ENDIF
\IF{$E = \{e\} \wedge e\not\in M$}
  \STATE Return $(\{e\}, \cdots, \{e\})$
\ENDIF

\vspace{2mm}

$\Rightarrow$ $G$ is a composition of $H_1, H_2$.
\vspace{2mm}

\STATE $(S_0^1, \cdots, S_k^1) \leftarrow $ \textbf{FTP-SeriesParallel}$(H_1, M\cap E[H_1], k)$
\STATE $(S_0^2, \cdots, S_k^2) \leftarrow $ \textbf{FTP-SeriesParallel}$(H_2, M\cap E[H_2], k)$

\IF{$G$ \text{is a series composition of} $H_1, H_2$}

\FOR{$i=0,\cdots,k$}
  \IF{$S_i^1 = \perp \vee \, S_i^2 = \perp$}
    \STATE $S_i \leftarrow \perp$
  \ELSE 
    \STATE $S_i \leftarrow S_i^1 \cup S_i^2$
  \ENDIF
\ENDFOR

\ENDIF

\IF{$G$ \text{is a parallel composition of} $H_1, H_2$}
  \STATE $m_1 \leftarrow \max\{i : S_i^1 \neq \perp\}$
  \STATE $m_2 \leftarrow \max\{i : S_i^2 \neq \perp\}$

  \FOR{$i=0,\cdots,k$}
    \IF{$i > m_1 + m_2 + 1$} 
      \STATE $S_i \leftarrow \perp$
    \ELSE 
      \STATE $r \leftarrow argmin_{-1\leq j\leq i}\{w(S_j^1) + w(S_{i-j-1}^2)\}$ \hspace{20mm} $// \, S^1_{-1} = S^2_{-1} := \emptyset$
      \STATE $S_i \leftarrow S_r^1 \cup S_{i-r-1}^2$
    \ENDIF
  \ENDFOR
\ENDIF

\STATE Return $(S_0, \cdots, S_k)$
\end{algorithmic}
\end{algorithm}

\begin{theorem}\label{thm:fastalgSRP}
Algorithm~\ref{alg:ercc:sircc:srp} returns an optimal solution to the FTP problem on SRP
graphs. The running time of Algorithm~\ref{alg:ercc:sircc:srp} is $O(nk)$.
\end{theorem}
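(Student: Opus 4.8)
\textbf{Proof proposal for Theorem~\ref{thm:fastalgSRP}.}

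The plan is to prove correctness by structural induction on the series-parallel decomposition tree, and to analyze the running time by a careful accounting of the work done at each node. The key observation to establish first is a structural characterization of feasible FTP solutions in SRP graphs: if $G$ is the series composition of $H_1$ and $H_2$ (sharing the middle vertex), then any $s$-$t$ path, and more importantly any failure set $F$, splits across the two halves, so $S$ is feasible for parameter $k$ in $G$ if and only if $S\cap E[H_1]$ is feasible for $k$ in $H_1$ \emph{and} $S\cap E[H_2]$ is feasible for $k$ in $H_2$. The subtle point here is that the adversary may place all its $k$ failures in one half; hence each half must individually tolerate $k$ failures, which is exactly what the series branch of the algorithm does by taking $S_i \leftarrow S_i^1 \cup S_i^2$. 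For the parallel composition, the characterization is that $S$ tolerates $k$ failures in $G$ iff there is a split $k = j + (i-j-1)$ wait --- more precisely, if $S\cap E[H_1]$ tolerates $j$ failures and $S\cap E[H_2]$ tolerates $k-j-1$ failures, then $S$ tolerates $k$ failures in the parallel composition, because after the adversary removes $\le k$ edges, by pigeonhole one of the two sides has at most $j$ (resp. $k-j-1$) of them removed for the appropriate choice of $j$, leaving an $s$-$t$ path in that side. The ``$+1$'' slack and the sentinel $S^1_{-1}=S^2_{-1}:=\emptyset$ account for the degenerate option of using only one side (a side that tolerates $-1$ failures contributes nothing and costs nothing, which is correct since a single surviving path on the other side already suffices when that side alone handles all $k$ failures, i.e.\ $j=k$, $i-j-1=-1$).

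Concretely, I would carry out the proof in the following steps. First, handle the base case: a single edge $e$ is a feasible FTP solution for parameter $k'$ iff either $e\notin M$ (it never fails), or $k'=0$; the algorithm returns exactly this, so the base case holds. Second, prove the series lemma: the minimum-cost solution tolerating $k'$ failures decomposes as the disjoint union of the minimum-cost solutions tolerating $k'$ failures in each half, and is infeasible exactly when either half is; combined with the inductive hypothesis that $S_i^1, S_i^2$ are optimal for the subgraphs, this gives optimality of $S_i = S_i^1\cup S_i^2$. Third, prove the parallel lemma: a set $S$ with $S\cap E[H_1]$ optimal-for-$j$ and $S\cap E[H_2]$ optimal-for-$(i-j-1)$ is feasible for $i$ in $G$, and conversely every feasible-for-$i$ solution in $G$ arises this way for some $-1\le j\le i$; minimizing $w(S_j^1)+w(S_{i-j-1}^2)$ over $j$ and using the inductive hypothesis then yields optimality. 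The infeasibility bookkeeping $i > m_1+m_2+1 \Rightarrow S_i=\perp$ follows because the best the two sides can jointly guarantee is $m_1 + m_2 + 1$ failures. Fourth, the running time: the decomposition tree has $O(n)$ internal nodes (one per composition, and an SRP graph on $n$ vertices has $O(n)$ edges hence $O(n)$ leaves), and at each node the algorithm does $O(k)$ work in the series case and $O(k^2)$ work in the parallel case (the inner $\argmin$ over $j$); a slightly more careful amortized argument --- charging the $\argmin$ scans against the depth or using the fact that the $S_i$ arrays can be combined via a single pass exploiting monotonicity of the optimal values in $i$ --- brings the per-node cost down to $O(k)$, for a total of $O(nk)$. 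I would spell out the monotonicity property $w(S_0^j)\le w(S_1^j)\le\cdots$ that makes the parallel combination computable in $O(k)$ rather than $O(k^2)$ time.

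The main obstacle I anticipate is the parallel-composition correctness argument, specifically the \emph{converse} direction: showing that \emph{every} minimal feasible solution $S$ for parameter $i$ in a parallel composition splits into a part feasible-for-$j$ on one side and feasible-for-$(i-j-1)$ on the other for some valid $j$. The natural approach is to define $j$ as the largest number of failures that $S\cap E[H_1]$ can tolerate (its ``robustness level''), set $j' $ to be the robustness level of $S\cap E[H_2]$, and argue $j + j' + 1 \ge i$ by contradiction: if $j+j'+1 < i$, i.e.\ $j+j' \le i-1 \le i$, the adversary removes $j+1$ edges killing all $s$-$t$ paths in $H_1$ and $j'+1$ edges killing all such paths in $H_2$, a total of $j+j'+2 \le i+1$ --- so this needs $j+j'+2 \le i$, i.e.\ $j + j' \le i - 2$, to be a valid failure set of size $\le i$; hence the clean statement is that $S$ is feasible-for-$i$ iff $j + j' + 1 \ge i$ where $j,j'$ are the respective robustness levels, and this is precisely what the algorithm exploits via $r = \argmin$ over $-1\le j\le i$. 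Getting these off-by-one relationships exactly right, and correctly handling the $\perp$/infeasible and empty-set sentinel cases uniformly, is where the bulk of the careful writing goes; the rest is routine induction.
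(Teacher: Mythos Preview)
Your correctness argument is essentially the paper's: structural induction on the decomposition, the series case via the cut-vertex observation that each half must independently tolerate all $k$ failures, and the parallel case by defining the robustness levels $n_1,n_2$ of the two restrictions and arguing $i\le n_1+n_2+1$ via a cut-counting contradiction. The off-by-one bookkeeping you flag is exactly what the paper handles with the sentinel $S^1_{-1}=S^2_{-1}=\emptyset$.

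On the running time you are actually more scrupulous than the paper. The paper simply asserts that the parallel case satisfies the same recurrence $T(m,k)\le T(m',k)+T(m-m',k)+O(k)$ as the series case, without addressing the fact that line~24 is an $\argmin$ over $O(i)$ candidates inside a loop over $i=0,\dots,k$, which is $O(k^2)$ work as written. You correctly spot this and propose to repair it via monotonicity of $i\mapsto w(S_i^j)$. That monotonicity is certainly true (any solution feasible for $i+1$ is feasible for $i$), but it is \emph{not} by itself enough to compute all the minima $\min_{-1\le j\le i}\{w(S_j^1)+w(S_{i-j-1}^2)\}$ in $O(k)$ total: this is a $(\min,+)$-convolution of two monotone sequences, and monotonicity alone does not force the optimal split $r(i)$ to be monotone in $i$. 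You would need a stronger structural property---e.g.\ concavity of the cost sequences, so that the underlying matrix is totally monotone---to get a linear-time merge, and that is not established. So your running-time argument has a genuine gap at exactly the point where the paper's own proof is silent; if you want the stated $O(nk)$ bound rather than $O(nk^2)$, this is the step that still needs a real argument.
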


\begin{proof}
 The proof of correctness is by induction on the depth of the recursion in Algorithm~\ref{alg:ercc:sircc:srp}. Clearly 
the result returned by Algorithm~\ref{alg:ercc:sircc:srp} in lines $1$-$6$ is optimal.
Assume next that the algorithm computed correctly all optimal solutions for the subgraphs $H_1, H_2$,
namely that for every $i\in [2]$ and $j\in [k]$, the set $S_j^i$ computed in lines $7$-$8$ is an
optimal solution to the problem on instance $\mathcal{I}_i^j = (H_i, M \cap E[H_i], j)$.

Assume first that $G$ is a series composition of $H_1$ and $H_2$, and let $0 \leq i \leq k$. If
either $S_i^1 = \perp$ or $S_i^2 = \perp$ the problem with parameter $i$ is clearly also infeasible,
hence the algorithm works correctly in this case. Furthermore, since $G$ contains a cut vertex (the
terminal node, which is in common to $H_1$ and $H_2$), a solution $S$ to the problem is feasible for
$G$ if and only if it is a union of two feasible solutions for $H_1$ and $H_2$. From the inductive
hypothesis it follows that $S_i$ is computed correctly in line $14$.

Assume next that $G$ is a series composition of $H_1$ and $H_2$. Consider any feasible solution $S'$
to the problem on $G$ with parameter $i$. Let $S'_1$ and $S'_2$ be the restrictions of $S'$ to edges
of $H_1$ and $H_2$ respectively, and let $n_1$ and $n_2$ be the maximal integers such that
$S'_1$ and $S'_2$ are robust paths for $H_1$ and $H_2$ with parameters $n_1$ and $n_2$, respectively.
Observe that $i \leq n_1 + n_2 + 1$ must hold. Indeed if this would not be the case, then taking
any cut with $n_1 + 1$ edges in $S'_1$ and another cut with $n_2 + 1$ edges in $S'_2$ yields a cut
with $n_1 + n_2 + 2$ edges in $G$, contradicting the fact that $S'$ is a robust path with parameter $i$.
We conclude that the algorithm computes $S_i$ correctly in line $23$. Finally note
that the union any two robust paths for the graphs $H_1$ and $H_2$ with parameters $n_1$ and $n_2$ with 
$i \leq n_1 + n_2 + 1$ yield a feasible solution $S_i$. It follows that the minimum cost such robust
path is obtained as a minimum cost of a union of two solutions for $H_1$ and $H_2$, with robustness
parameters $j$ and $i-j-1$ for some value of $j$. To allow $S_i = S_i^1$ or $S_i = S_i^2$ we let 
$j$ range from $-1$ to $k$ and set $S^1_{-1} = S^2_{-1} = \emptyset$. This completes the proof of correctness.

To prove the bound on the running time, let $T(m,k)$ denote the running time of the algorithm on
a graph with $m$ edges and robustness parameter $k$. We assume that the graph is given by a 
hierarchical description, according to its decomposition into single edges.
The base case obviously takes $O(k)$ time. Furthermore we assume that the solution $(S_0, \cdots, S_k)$
is stored in a data structure for sets, which uses $O(1)$ time for generating empty sets and for
performing union operations.
If the graph is a series composition then the running time satisfies 
$T(m,k) \leq T(m',k) + T(m-m',k) + O(k)$ for some $m' < m$. If the graph is a parallel composition, then $T(m,k)$
satisfied the same inequality. We assume that the data structure, which stores the sets $S_i$
also contains the cost of the edges in the set. This value can be easily updates in time $O(1)$
when the assignment into $S_i$ is performed. It follows that $T(m,k) = O(mk) = O(nk)$ as required.
\end{proof}

\section{Integrality Gap and Approximation Algorithms}\label{sec:intgap}

In this section we study the natural fractional relaxation of FTP. We prove a tight
bound on the integrality gap of this relaxation. This results also suggests a simple
approximation algorithm for FTP with ratio $k+1$. We later combine this algorithm 
with Algorithm~\ref{alg:kisone} to obtain a $k$-approximation algorithm.

\subsection{Fractional FTP and Integrality Gap}

Let us start by introducing the fractional relaxation of FTP, which we denote by FRAC-FTP. The
input to FRAC-FTP is identical to the input to FTP. The goal in FRAC-FTP is to find a capacity
vector $x:E \rightarrow [0,1]$ of minimum cost $w(x) = \sum_{e\in E} w_e x_e$ such that for
every $F\in \Omega(M,k)$, the maximum $s$-$t$ flow in $G-F$, capacitated by $x$ is at least one. 
Note that by the Max-Flow Min-Cut Theorem, the latter condition is equivalent to requiring that 
the minimum $s$-$t$ cut in $G-F$ has capacity of at least one. We will use this fact in the proof
of the main theorem in this section.

Observe that by requiring $x\in \{0,1\}^E$ we obtain FTP, hence FRAC-FTP is indeed a fractional relaxation
of FTP.

In the following theorem by 'integrality gap' we mean the maximum ratio between the optimal solution
value to an FTP instance, and the optimal value of the corresponding FRAC-FTP instance.

\begin{theorem}\label{thm:integralitygap}
The integrality gap of FTP is bounded by $k+1$. Furthermore, there exists an infinite family of
instances of FTP with integrality gap arbitrarily close to $k+1$.
\end{theorem}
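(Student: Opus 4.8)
For the upper bound, the plan is to take an optimal fractional solution $x$ to FRAC-FTP and round it to an integral solution of cost at most $(k+1)\cdot w(x)$. The natural approach is the following: consider the "safe-edge-included" problem where we first buy (for free, in the analysis) a cheapest $s$-$t$ path $P_0$ in the graph $(V, E\setminus M)$ restricted to edges with $x_e$ large; but a cleaner route is to iterate. I would use the characterization that feasibility is equivalent to every $F\in\Omega(M,k)$ leaving an $s$-$t$ path, i.e. every $s$-$t$ cut contains either a safe edge or at least $k+1$ edges of $M$. Scale $x$ up by $k+1$ to get $y = (k+1)x$, which is a feasible fractional capacity vector for which every $s$-$t$ cut in $G-F$ has capacity at least $k+1$ for $F=\emptyset$; more usefully, I would argue directly on cuts: in $G$, every $s$-$t$ cut $C$ satisfies $x(C\cap(E\setminus M)) + $ (something) $\ge 1$ under all $k$-removals, which forces either $C$ to contain a safe edge or $|C\cap M|\ge k+1$ with enough fractional mass spread out. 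Then round each $x_e$ up to $1$ if $x_e \ge \frac{1}{k+1}$ and to $0$ otherwise; the cost increases by a factor at most $k+1$, and I must check the resulting $0/1$ vector is feasible. For feasibility: in any cut $C$, if $C$ contained no safe edge and at most $k$ edges of $M$, then under the worst failure $F = C\cap M$ the fractional capacity surviving is $0 < 1$, contradicting feasibility of $x$; so either there is a safe edge in $C$ (kept, since safe edges needed for feasibility will have $x_e=1$ in a sensible optimal $x$, or can be forced) or $|C\cap M|\ge k+1$, and among these $k+1$ edges at least one must have $x_e\ge\frac{1}{k+1}$ by an averaging argument applied to the failure that removes the $k$ heaviest — hence at least one is rounded up, keeping $C$ covered. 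This shows the rounded set meets every cut in every scenario, so it is a fault-tolerant path.

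For the lower bound, the plan is to exhibit an explicit family of instances. I would take a "parallel bundles in series" gadget: a series concatenation of $N$ copies of a parallel block, where each block consists of one expensive safe edge of cost $1$ and a bundle of $k+1$ cheap faulty edges each of cost $\frac{1}{k+1}+\varepsilon$ (or cost roughly $\frac{1}{k+1}$ times something), so that fractionally one can put capacity $\frac{1}{k+1}$ on each of the $k+1$ faulty edges in a block — surviving any $k$ failures with leftover capacity $\frac{1}{k+1}$ — wait, that only gives $\frac1{k+1}$, so instead one routes $\frac{1}{k+1}$ on each of the $k+1$ parallel faulty edges giving total capacity $1$ across the block and capacity $\frac{1}{k+1}\cdot 1 = \frac1{k+1}$... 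I need the block to carry one unit of flow after $k$ deletions, so fractionally put $x_e = \frac1{k+1}\cdot\frac{k+1}{1}$; the correct calibration is $x_e$ on each of $k+1$ faulty edges with $(k+1)x_e - k x_e = x_e$ surviving, so I actually need a single safe edge fractionally at value $1$ OR the cheap structure: put weight split so the fractional optimum per block is $\approx \frac1{k+1}\cdot(\text{cost of one safe edge})$ while any integral solution must either buy the safe edge (cost $1$) or buy all $k+1$ faulty edges (cost $\approx 1$). Making the faulty-edge costs sum to just over $1$ and the safe edge cost exactly $1$, while the fractional cost per block is $\approx \frac{1}{k+1}$, gives a per-block ratio tending to $k+1$; concatenating $N$ blocks in series multiplies numerator and denominator equally so the ratio is preserved, and letting $\varepsilon\to 0$ pushes it arbitrarily close to $k+1$. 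I would verify both feasibilities via the cut characterization on the series-parallel structure (using Theorem~\ref{thm:fastalgSRP}'s exactness as a sanity check on the integral optimum).

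The main obstacle I anticipate is the rounding-direction argument for the upper bound: one must handle safe edges carefully, since an optimal fractional solution might place fractional values on safe edges that lie in "bottleneck" cuts, and the naive threshold rounding at $\frac1{k+1}$ could drop a safe edge whose fractional value is small yet which is the only thing covering some cut that has only $k$ faulty edges. The fix is to observe that any cut with at most $k$ faulty edges must be fully blocked by safe capacity summing to $\ge 1$ in the fractional solution (since the scenario deleting all its faulty edges is admissible), so by averaging some safe edge in it has $x_e\ge\frac{1}{|C\cap(E\setminus M)|}$, but that denominator is unbounded — so instead I would first reduce to instances where the safe subgraph $(V,E\setminus M)$ either already connects $s,t$ cheaply or is handled by a separate shortest-path computation, and then run the threshold rounding only on $M$. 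Concretely: compute the cheapest $s$-$t$ path $P$ in $(V,E\setminus M)$ (if any), and separately the threshold rounding; this interplay is the delicate part, but it parallels the $k$-approximation refinement already promised in the introduction, so the structure should go through.
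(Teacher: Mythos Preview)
Your upper-bound argument has a real gap, and you correctly put your finger on it at the end: threshold rounding at $\tfrac{1}{k+1}$ can discard \emph{all} safe edges crossing a cut. Concretely, take $s$ and $t$ joined by $n>k+1$ parallel safe edges of unit cost and $M=\emptyset$; the fractional optimum sets $x_e=\tfrac{1}{n}$ everywhere, and your rounding returns the empty set. Your proposed patch (``first handle the safe subgraph by a separate shortest-path computation'') does not repair this in general, because safe and faulty edges can be interleaved arbitrarily along every $s$-$t$ path, and there is no single safe path to extract. The paper sidesteps the issue entirely by \emph{not} rounding coordinatewise. Instead it scales: set $y_e=(k+1)x^*_e$ for $e\notin M$ and $y_e=\min\{1,(k+1)x^*_e\}$ for $e\in M$, so $w(y)\le(k+1)\,\mathrm{OPT}$. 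A short cut argument (essentially your ``delete the $k$ faulty edges of highest value'' idea, applied to the set $C'=\{e\in C\cap M:x^*_e\ge\tfrac{1}{k+1}\}$) shows every $s$-$t$ cut has $y$-capacity at least $k+1$. Now define integer capacities $c_e=k+1$ for $e\notin M$ and $c_e=1$ for $e\in M$; since $y\le c$, a min-cost integral $(k+1)$-flow $z^*$ under $c$ exists and has $w(z^*)\le w(y)$. Flow integrality produces the integral object for free, and Lemma~\ref{lem:disjointpaths} certifies that the support of $z^*$ is a feasible FTP solution. The point is that safe edges are never thresholded away; they are simply allowed to carry up to $k+1$ units in the flow.

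Your lower-bound gadget is tangled and, as you noticed mid-paragraph, the capacities do not add up: with $k+1$ parallel faulty edges at $x_e=\tfrac{1}{k+1}$ the surviving capacity after $k$ deletions is only $\tfrac{1}{k+1}$, not $1$. The paper's example is far simpler and needs no safe edges or series composition: take $D\gg k$ parallel unit-cost edges with $M=E$. The fractional optimum sets $x_e=\tfrac{1}{D-k}$ on every edge (so after any $k$ deletions the remaining $D-k$ edges carry one unit), costing $\tfrac{D}{D-k}\to 1$, while any integral solution must pick $k+1$ edges. The ratio tends to $k+1$ as $D\to\infty$.
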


\begin{proof}
Consider an instance $I = (G,M,k)$ of FTP. Let $x^*$ denote an optimal solution to the corresponding
FRAC-FTP instance, and let $OPT = w(x^*)$ be its cost. Define a vector $y\in \mathbb{R}^E$ as follows.
\begin{equation}\label{eq:gapproof1}
y_e = 
\begin{cases} 
(k+1) x_e    & \mbox{if } e\not\in M \\ 
\min\{1, (k+1)x_e\}  & \mbox{otherwise}.
\end{cases}
\end{equation}
Clearly, it holds that $w(y) \leq (k+1)OPT$. 
We claim that every $s$-$t$ cut in $G$ with capacities $y$ has capacity of at least $k+1$.
Consider any such cut $C \subset E$, represented as the set of edges in the cut. Let 
$M' = \{e\in M : x^*_e \geq \frac{1}{k+1}\}$ denote the set of faulty edges attaining high fractional
values in $x^*$. Define $C' = C\cap M'$. If $|C'| \geq k+1$ we are clearly done. Otherwise, assume $|C'| \leq k$.
In this case consider the failure scenario $F = C'$. Since $x^*$ is a feasible solution it must hold that
\begin{equation}\label{eq:gapproof2}
\sum_{e\in C\setminus C'} x^*_e \geq 1.
\end{equation}
Since for every edge $e\in C \setminus C'$ it holds that $y_e = (k+1)x^*_e$ we obtain
\begin{equation}\label{eq:gapproof3}
\sum_{e\in C\setminus C'} y_e \geq k+1,
\end{equation} 
as desired. From our observations it follows that the maximum flow in $G$ with capacities $y$ is at least
$k+1$. Finally, consider the minimum cost $(k+1)$-flow $z^*$ in $G$ with capacities defined by
\begin{equation}\label{eq:gapproof4}
c_e = 
\begin{cases} 
k+1    & \mbox{if } e\not\in M \\ 
1  & \mbox{otherwise}.
\end{cases}
\end{equation}
From integrality of $c$ and the minimum-cost flow problem we can
assume that $z^*$ is integral. Note that $y_e \leq c_e$ for every $e\in E$, hence any feasible $(k+1)$-flow with capacities
$y$ is also a feasible $(k+1)$-flow with capacities $c$. From the previous observation it holds that $w(z^*) \leq w(y) \leq (k+1)OPT$.
From Lemma~\ref{lem:disjointpaths} we know that $z^*$ is a feasible solution to the FTP instance.
This concludes the proof of the upper bound of $k+1$ for the integrality gap. 

To prove the same lower bound we provide an infinite family of instances, containing instances with integrality
gap arbitrarily close to $k+1$. Consider a graph with $D \gg k$ parallel edges with unit cost connecting $s$ and $t$, and let $M=E$. 
The optimal solution to FTP
on this instance chooses any subset of $k+1$ edges. At the same time, the optimal solution to FRAC-FTP assigns
a capacity of $\frac{1}{D-k}$ to every edge. This solution is feasible, since in every failure scenario, the number
of edges that survive is at least $D-k$, hence the maximum $s$-$t$ flow is at least one. The cost of this solution is $\frac{D}{D-k}$.
Taking $D$ to infinity yields instances with integrality gap arbitrarily close to $k+1$.
\end{proof}

The proof of Theorem~\ref{thm:integralitygap} implies a simple $(k+1)$-approximation algorithm for
FTP. This algorithm simply solves the integer minimum-cost flow problem, defined in proof of the theorem, and returns 
the set of edges corresponding to the optimal integral flow $z^*$ as the solution. 
This result is summarized in the following corollary.

\begin{corollary}\label{cor:kplusone_pprox}
There is a polynomial $(k+1)$-approximation algorithm for FTP.
\end{corollary}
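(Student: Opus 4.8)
The plan is to read both the algorithm and its analysis straight out of the proof of Theorem~\ref{thm:integralitygap}. The algorithm I would use is: build the capacity vector $c$ of~\eqref{eq:gapproof4}, i.e.\ $c_e = k+1$ for $e\in E\setminus M$ and $c_e = 1$ for $e\in M$; compute a minimum-cost integral $s$-$t$ $(k+1)$-flow $z^*$ with respect to $c$ and the weights $w$; and return its support $S = \{e\in E : z^*_e > 0\}$. The minimum-cost $(k+1)$-flow problem is solvable in polynomial time, and since $c$ is integral the integrality property of minimum-cost flow lets us take $z^*$ integral; hence the whole procedure is polynomial. So the only things left to check are that $S$ is FTP-feasible and that $w(S)\le (k+1)\,OPT$, where $OPT$ is the optimal FTP value.

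For feasibility I would invoke Lemma~\ref{lem:disjointpaths} (in the general, directed case its strengthening in Remark~\ref{rem:dadskisone}): the flow $z^*$ witnesses that the maximum $s$-$t$ flow in $(V,S)$ with every edge of $M$ capacitated by $1$ and every other edge uncapacitated is at least $k+1$. Concretely, for any failure scenario $F\in\Omega(M,k)$, deleting the at most $k$ edges of $F$ from $S$ removes at most $k$ units of $z^*$, since each $M$-edge carries at most one unit of $z^*$; thus a positive amount of flow, and therefore an $s$-$t$ path, survives in $S\setminus F$. So $S$ is a feasible solution to the FTP instance.

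For the cost bound I would simply quote the chain of inequalities already established in the proof of Theorem~\ref{thm:integralitygap}: starting from an optimal FRAC-FTP solution $x^*$ with value $OPT_{\mathrm{FRAC}}=w(x^*)$, that proof constructs a vector $y$ with $w(y)\le (k+1)\,OPT_{\mathrm{FRAC}}$ that supports a $(k+1)$-flow for the capacities $c$, whence the minimum-cost integral $(k+1)$-flow satisfies $w(z^*)\le w(y)\le (k+1)\,OPT_{\mathrm{FRAC}}$; and since FRAC-FTP is a relaxation of FTP, $OPT_{\mathrm{FRAC}}\le OPT$, giving $w(S)=w(z^*)\le (k+1)\,OPT$. (Alternatively, one can argue directly: an optimal FTP solution $S^*$ carries, by Lemma~\ref{lem:disjointpaths}, an $s$-$t$ flow of value $k+1$ in which every edge of $S^*\cap M$ carries at most one unit and every other edge at most $k+1$ units, which is feasible for $c$ and has cost at most $(k+1)\,w(S^*)$.)

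There is no genuinely hard step — the statement is a corollary of Theorem~\ref{thm:integralitygap}. The only point that warrants a little care is applying Lemma~\ref{lem:disjointpaths} in the two required directions: once to certify that the returned support $S$ is fault-tolerant, and once (in the form of the flow extracted from $x^*$, or from $S^*$) to bound the cost of the computed flow. Everything else is the standard integrality of minimum-cost flow.
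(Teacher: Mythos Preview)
Your proposal is correct and matches the paper's approach exactly: solve the integral min-cost $(k+1)$-flow with the capacities of~\eqref{eq:gapproof4}, return its support, invoke Lemma~\ref{lem:disjointpaths} for feasibility, and use the chain $w(z^*)\le w(y)\le (k+1)\,OPT_{\mathrm{FRAC}}\le (k+1)\,OPT$ from the proof of Theorem~\ref{thm:integralitygap} for the cost bound. One tiny slip: you should write $w(S)\le w(z^*)$ rather than equality, since edges in $E\setminus M$ may carry more than one unit of flow, but the inequality goes in the direction you need.
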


\subsection{A $k$-Approximation Algorithm}\label{subsec:kapprox}

In this section we improve the approximation algorithm from the previous section.
The new algorithm can be seen as a generalization of Algorithm~\ref{alg:kisone} to arbitrary FTP instances. The main
observation is the following. The reason why the approximation algorithm implied by Theorem~\ref{thm:integralitygap}
gives an approximation ratio of $k+1$ is that the capacity of edges in $E\setminus M$ is set to $k+1$, hence,
if the flow $z^*$ uses such edges to their full capacity, the cost incurred is $k+1$ times the cost of these
edges. This implies that the best possible lower bound on the cost $w(z^*)$ is $(k+1)OPT_{FRAC}$, where 
$OPT_{FRAC}$ denotes the optimal solution value of the corresponding FRAC-FTP instance. 
To improve the 
algorithm we observe that the edges in $z^*$, which carry a flow of $k+1$ are cut-edges in the obtained solution.

To conveniently analyze our new algorithm let us consider a certain canonical flow defined 
by minimal feasible solutions.

\begin{definition}\label{def:canonicalflow}
Consider an inclusion-wise minimal feasible solution $S\subset E$ of an instance $I = (G,s,t,M)$ of FTP. 
A \emph{flow $f^S$ induced by $S$} is any integral $s$-$t$ $(k+1)$-flow in $G$ respecting the capacity vector
$$
c^S_e = 
\begin{cases} 
1    & \mbox{if }  e \in S\cap M\\ 
k+1  & \mbox{if }  e \in S \setminus M\\ 
0    & \mbox{if }   e\in E\setminus S.
\end{cases}
$$
\end{definition}

%
%
%

To this end consider an optimal solution $X^* \subset E$ to the FTP instance and consider any corresponding 
induced flow $f^*$. Define 
$$X_{PAR} = \{e\in X^* : f^*(e) \leq k\} \,\,\, \text{and} \,\,\,  X_{BRIDGE} = \{e\in X^* : f^*(e) = k+1\}.$$
As we argued before, every edge in $X_{BRIDGE}$ must be a bridge in $H = (V,X^*)$ disconnecting $s$ and $t$.
Let $u_e$ denote the tail vertex of an edge $e\in E$. Since every edge $e\in X_{BRIDGE}$ constitutes
an $s$-$t$ cut in $H$, it follows that the vertices in $U = \{e_u : e\in X_{BRIDGE}\} \cup \{s,t\}$ can be unambigously 
ordered according to the order in which they appear on any $s$-$t$ path in $H$, traversed from $s$ to $t$. 
Let $s = u_1, \cdots, u_q = t$ be this order. Except for $s$ and $t$, every vertex in $U$ constitutes
a cut-vertex in $H$. Divide $H$ into $q-1$ subgraphs $H^1, \cdots, H^{q-1}$ by letting $H^i = (V, Y_i)$ contain the 
union of all $u_i$-$u_{i+1}$ paths in $H$. We observe the following property.

\begin{proposition}\label{prop:decomp}
For every $i\in [q-1]$ the set $Y_i\subset E$ is an optimal solution to the FTP instance $I_i = (G,u_i, u_{i+1}, M)$.
\end{proposition}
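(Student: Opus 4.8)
The plan is to prove optimality of $Y_i$ as a solution to $I_i=(G,u_i,u_{i+1},M)$ by combining two ingredients: feasibility of $Y_i$ for $I_i$, and a cost-exchange argument showing that any cheaper solution to some $I_i$ could be spliced into $X^*$ to produce a cheaper feasible solution to the original instance, contradicting optimality of $X^*$. First I would establish feasibility. Since $f^*$ is a $(k+1)$-flow respecting $c^{X^*}$, and $u_i,u_{i+1}$ are consecutive cut vertices in $H=(V,X^*)$, every unit of flow from $s$ to $t$ passes through $u_i$ and then through $u_{i+1}$; restricting the path decomposition of $f^*$ to the $u_i$-$u_{i+1}$ segments yields a $(k+1)$-flow in $H^i=(V,Y_i)$ from $u_i$ to $u_{i+1}$ respecting the induced capacities. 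By Lemma~\ref{lem:disjointpaths} (applied to the instance $I_i$, after contracting the safe edges of $Y_i$), this certifies that $Y_i$ is a feasible solution to $I_i$: in $H^i$, either $u_i$ and $u_{i+1}$ collapse to the same vertex after contracting $Y_i\setminus M$, or there are $k+1$ edge-disjoint $u_i$-$u_{i+1}$ paths, and in both cases $Y_i$ survives every scenario in $\Omega(M,k)$.

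Next I would set up the exchange argument. Suppose for contradiction that for some index $i$ there is a feasible solution $Z$ to $I_i$ with $w(Z) < w(Y_i)$. Consider $X' = (X^* \setminus Y_i) \cup Z$. The key structural fact is that the subgraphs $H^1,\dots,H^{q-1}$ partition the edge set $X^*$ (each non-bridge edge lies on some $u_j$-$u_{j+1}$ path for exactly one $j$, and bridge edges are the single edges joining consecutive $u_j$'s, which we can place in whichever block is convenient), so replacing the $i$-th block does not interfere with the others. I would then argue that $X'$ is feasible for the original instance $I=(G,s,t,M)$: given any scenario $F\in\Omega(M,k)$, in each block $H^j$ with $j\neq i$ the edges $Y_j\setminus F$ still connect $u_j$ to $u_{j+1}$ (since $Y_j$ was part of the feasible $X^*$ and, crucially, $F$ restricted to block $j$ has size at most $k$ — here one uses that a scenario hitting block $j$ with more than... no: one uses that $F$ has total size $\le k$, hence its restriction to any single block has size $\le k$, and $Y_j$ is itself a feasible FTP solution for $I_j$ by the same flow-restriction argument as above), and in block $i$ the edges $Z\setminus F$ connect $u_i$ to $u_{i+1}$ since $Z$ is feasible for $I_i$. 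Concatenating the $u_j$-$u_{j+1}$ paths for $j=1,\dots,q-1$ gives an $s$-$t$ path in $X'\setminus F$. Since $w(X') = w(X^*) - w(Y_i) + w(Z) < w(X^*)$, this contradicts optimality of $X^*$.

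The main obstacle I anticipate is making the decomposition claim fully rigorous — specifically, verifying that the blocks $Y_i$ are genuinely edge-disjoint and jointly cover $X^*$, and that each $Y_i$ is by itself an $\Omega(M,k)$-feasible solution for $I_i$ (not merely that $X^*$ is feasible for $I$). This requires a careful argument that the flow $f^*$, when restricted to the portion between two consecutive bridge-endpoints $u_i$ and $u_{i+1}$, is supported exactly on $Y_i$ and still has value $k+1$; one must also handle the degenerate case where $u_i$ and $u_{i+1}$ are joined directly by a bridge in $X_{BRIDGE}$, in which case $Y_i$ is that single safe edge and feasibility is immediate. A secondary subtlety is ensuring minimality of $X^*$ can be assumed (or that the argument does not need it): if $X^*$ is not inclusion-minimal the bridge structure and the induced flow are still well-defined, so the argument goes through, but it is cleanest to note at the outset that we may take $X^*$ minimal, exactly as in the proof of Lemma~\ref{lem:robustbipath_contained}.
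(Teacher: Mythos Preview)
Your approach is correct and is exactly the natural argument; the paper states Proposition~\ref{prop:decomp} as an observation and gives no proof, so there is nothing to compare against beyond confirming that your feasibility-plus-exchange argument is the intended one.

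One small correction: you invoke Lemma~\ref{lem:disjointpaths} to certify feasibility of $Y_i$, but that lemma runs in the opposite direction (feasibility of an optimal solution implies $k+1$ disjoint paths after contracting safe edges). What you need is the easy converse: since the restricted flow $f^*_i$ is a $u_i$-$u_{i+1}$ $(k+1)$-flow respecting capacities $1$ on $M$ and $k+1$ on $E\setminus M$, removing any $F\in\Omega(M,k)$ drops the value of every $u_i$-$u_{i+1}$ cut by at most $k$, so a unit of flow survives and $Y_i\setminus F$ contains a $u_i$-$u_{i+1}$ path. Even more directly, for any $F\in\Omega(M,k)$ the $s$-$t$ path in $X^*\setminus F$ guaranteed by feasibility of $X^*$ must pass through $u_i$ and $u_{i+1}$ (they are cut vertices of $H$), and its $u_i$-$u_{i+1}$ segment lies in $Y_i$ by definition of $Y_i$.

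The edge-disjointness and covering of $X^*$ by the $Y_i$ that you flag as the main obstacle are routine once you assume $X^*$ is inclusion-minimal (legitimate since edge costs are nonnegative): every edge of $X^*$ lies on some $s$-$t$ path in $H$, every such path visits $u_1,\dots,u_q$ in this order, and each cut vertex $u_j$ with $1<j<q$ separates the blocks on either side, so an edge can belong to at most one $Y_i$. Your exchange step then goes through verbatim, using only $w((X^*\setminus Y_i)\cup Z)\le w(X^*)-w(Y_i)+w(Z)$.
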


Consider some $i\in[q-1]$ and let $f^*_i$ denote the flow $f^*$, restricted to edges in $H^i$. Note that
$f^*_i$ can be viewed as a $u_i$-$u_{i+1}$ $(k+1)$-flow.
Exactly one of the following cases can occur. Either $H^i$ contains a single edge $e\in E\setminus M$, or 
$$
\max_{e\in Y_i}{f^*_i(e)} \leq k.
$$
In the former case, the edge $e$ is the shortest $u_i$-$u_{i+1}$ path in $(V,E\setminus M)$. In the latter
case we can use a slightly updated variant of the algorithm in Corollary~\ref{cor:kplusone_pprox} to obtain
a $k$-approximation of the optimal FTP solution on instance $I_i$. Concretely, the algorithm defines 
the capacity vector
$$
c'_e = 
\begin{cases} 
k    & \mbox{if } e\not\in M \\ 
1  & \mbox{otherwise},
\end{cases}
$$
and finds an integral minimum-cost $u_i$-$u_{i+1}$ $(k+1)$-flow $Y^*$ in $G$, and returns the support $Y\subset E$
of the flow as the solution. The existence of the flow $f^*_i$ guarantees that $w(y^*) \leq w(f^*_i)$,
while the fact that the maximum capacity in the flow problem is bounded by $k$ gives $w(Y) \leq kw(y^*)$.
It follows that this algorithm approximates the optimal solution to the FTP instance $I_i$ to within a factor $k$.

To describe the final algorithm it remains use the blueprint of Algorithm~\ref{alg:kisone}. The only difference
is in the computation performed in step $5$. Instead of finding two edge-disjoint $u$-$v$ paths, the new
algorithm solves the aforementioned flow problem. An appropriate adaptation is also made to step $12$.
We summarize the main result of this section in the following theorem. The proof is omitted, as it is identical
to that of Theorem~\ref{thm:kisone}, with the exception of the preceding discussion.

\begin{theorem}\label{thm:kapprox}
There is a polynomial $k$-approximation algorithm for FTP.
\end{theorem}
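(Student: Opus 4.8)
The plan is to follow the skeleton of Algorithm~\ref{alg:kisone} verbatim, replacing only the two places where that algorithm invoked a $2$-flow (edge-disjoint pair of paths) with the modified $(k+1)$-flow described in the preceding discussion. Concretely, I would define, for every ordered pair $u,v\in V$, two lengths: $\ell_1(u,v)$, the shortest $u$-$v$ distance in $G_{SAFE}=(V,E\setminus M)$, and $\ell_2(u,v)$, the cost of an integral minimum-cost $u$-$v$ $(k+1)$-flow under the capacities $c'_e=k$ for $e\notin M$ and $c'_e=1$ for $e\in M$. Set $\ell(u,v)=\min\{\ell_1(u,v),\ell_2(u,v)\}$, build the complete digraph on $V$ with this length function, compute a shortest $s$-$t$ path $P$ in it, and for each arc $uv\in P$ add to the solution the corresponding realizing edge set (a shortest safe path if $\ell=\ell_1$, the support of the $(k+1)$-flow if $\ell=\ell_2$). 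All of this is clearly polynomial, since minimum-cost flows and shortest paths are polynomial.

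The correctness argument has two directions. For the \textbf{soundness} direction, I must argue that whatever the algorithm returns is a feasible FTP solution. Each segment corresponding to an arc $uv\in P$ is either a safe $u$-$v$ path (contributing no failure-vulnerable edges) or the support of a $u$-$v$ $(k+1)$-flow under $c'$; in the latter case Lemma~\ref{lem:disjointpaths} (or rather the capacitated min-cut observation it rests on, and Remark~\ref{rem:dadskisone} for the directed case) guarantees that this support, together with the safe edges glued in series on either side, still admits $k+1$ "edge-disjoint-after-any-$k$-failures" routing between $s$ and $t$; concatenating segments in series across the cut vertices $u_1,\dots,u_q$ preserves this. For the \textbf{completeness/optimality} direction, I would take an optimal solution $X^*$, fix an induced flow $f^*$ as in Definition~\ref{def:canonicalflow}, and use Proposition~\ref{prop:decomp}: $X^*$ decomposes along its bridges into optimal sub-solutions $Y_i$ for the instances $I_i=(G,u_i,u_{i+1},M)$, each $Y_i$ being either a single safe edge or a set on which $\max_e f^*_i(e)\le k$. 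In the single-safe-edge case $\ell_1(u_i,u_{i+1})\le w(Y_i)$. In the other case, $f^*_i$ is itself a feasible $(k+1)$-flow respecting the capacities $c'$ (every edge carries at most $k$, and $1\le k$ on faulty edges since those carry at most $1$), so the min-cost flow value is $\le w(f^*_i)\le w(Y_i)$, and extracting its integral support multiplies the cost by at most $k$ because no arc carries more than $k$ units. Summing over $i$, the shortest-path computation over the complete graph with length $\ell$ finds a path of total length at most $\sum_i w(Y_i)=w(X^*)$, and each segment's realization costs at most $k$ times its length, giving a solution of cost at most $k\cdot w(X^*)$.

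The main obstacle, and the only point where real care is needed, is the interface between the segment-wise guarantee and the global feasibility claim: the decomposition of the optimum along bridges is a statement about $X^*$, whereas the algorithm does not know $X^*$ and instead glues together arbitrary shortest $u$-$v$ segments chosen by the shortest-path routine. I must therefore verify that \emph{any} series concatenation of robust $u$-$v$ blocks of the two allowed types yields a globally feasible FTP solution for $(G,s,t,M)$ — this is where the series-composition reasoning from the proof of Theorem~\ref{thm:fastalgSRP} (feasibility is preserved across cut vertices, and a $(k+1)$-edge cut cannot be created by concatenating blocks each surviving $k$ failures) is invoked, and where one checks that the min-cost $(k+1)$-flow support under $c'$, although it may not be a literal union of two paths as in the $k=1$ case, still certifies the $k+1$ connectivity via Remark~\ref{rem:dadskisone}. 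Once this gluing lemma is in place the theorem follows exactly as Theorem~\ref{thm:kisone} did.
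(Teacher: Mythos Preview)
Your proposal is exactly the approach the paper takes: the paper explicitly says the proof ``is identical to that of Theorem~\ref{thm:kisone}, with the exception of the preceding discussion,'' and that preceding discussion is precisely the bridge decomposition via Proposition~\ref{prop:decomp} together with the modified $(k+1)$-flow under capacities $c'$ that you describe.

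One small accounting slip to fix: you write that the min-cost flow value satisfies $w(y^*)\le w(f^*_i)\le w(Y_i)$ and that ``extracting its integral support multiplies the cost by at most $k$.'' Both inequalities are pointed the wrong way. What actually holds is
\[
w(Y)\;\le\; w(y^*)\;\le\; \sum_{e} w_e f^*_i(e)\;\le\; k\cdot w(Y_i),
\]
where the first inequality is because an integral flow has value at least $1$ on each support edge, the second because $f^*_i$ is feasible for the capacities $c'$, and the third because $f^*_i(e)\le k$ on every edge of $Y_i$. So the factor $k$ enters when you compare the flow cost of $f^*_i$ to the edge-set cost $w(Y_i)$, not when you pass from the flow $y^*$ to its support. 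With this correction your argument goes through verbatim and matches the paper.
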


\section{Conclusions and Future Work}\label{sec:conclusions}

This paper presents FTP, a simple natural extension of the minimum $k$ edge-disjoint path problem,
interpreted as a fault-tolerant connection problem. In our model not all subsets of $k$ edges 
can be removed from the graph after a solution is chosen, but rather any subset of \emph{faulty edges},
a subset provided as part of the input. Such an adaptation is natural from the point of view of
many application domains.

We observed a dramatic increase in the computational complexity of FTP with respect EDP.
At the same time we identified several classes of instances admitting a polynomial exact algorithm.
These classes include the case $k=1$, directed acyclic graphs and fixed $k$ and series-parallel graphs.
Next, we defined a fractional counterpart of FTP and proved a tight bound on the corresponding integrality
gap. This result lead to a $k$-approximation algorithm for FTP. We conclude by stating a number
of promising directions for future research.

Perhaps the main problem remains to understanding the approximability of FTP. In particular,
it is interesting to see if the approximation guarantee for FTP can be decreased to the
approximation guarantees of the best known algorithms for the Steiner Tree problem. 
It is also interesting to relate FTP to more general problems such a Minimum-Cost Fixed-Charge 
Network Flow and special cases thereof. 
The complexity of $k$-FTP in still unknown. It is interesting to see if the methods
employed in the current paper for $1$-FTP and $k$-FTP on directed acyclic graphs can be 
extended to $k$-FTP on general graphs.
It is also natural to study other classical combinatorial optimization problems in the
model proposed in this paper. In particular, the counterpart of the Minimum Spanning Tree
problem seems to be a promising next step. We remark that all the results about integrality
gaps and approximability of FTP can be carried over (with some modifications) to this problem.
For example, the proof of Theorem~\ref{thm:integralitygap} can be adapted to the latter
problem to give an integrality gap of $k+3$, in combination with the integrality gap result
for the Minimum-Cost $k$-Edge Connected Spanning Subgraph problem of Gabow et al.~\cite{kEdgeConnected1}.

\bibliographystyle{plain}


\begin{thebibliography}{10}

\bibitem{Adjiashvili}
D.~Adjiashvili and R.~Zenklusen.
\newblock An s--t connection problem with adaptability.
\newblock {\em Discrete Applied Mathematics}, 159(8):695--705, 2011.

\bibitem{MinMaxRegretSP}
H.~Aissi and D.~Bazgan, C.~Vanderpooten.
\newblock Approximation complexity of min-max (regret) versions of shortest
  path, spanning tree, and knapsack.
\newblock In {\em ESA}, pages 862--873, 2005.

\bibitem{AdjRC}
A.~Ben-Tal, S.~Boyd, and A.~Nemirovski.
\newblock Extending scope of robust optimization: Comprehensive robust
  counterparts of uncertain problems.
\newblock {\em Math. Program.}, 107(1):63--89, 2006.

\bibitem{FiniteadAptability}
D.~Bertsimas and C.~Caramanis.
\newblock Finite adaptability in multistage linear optimization, 2007.

\bibitem{RDO}
D.~Bertsimas and M.~Sim.
\newblock Robust discrete optimization and network flows.
\newblock {\em Mathematical Programming Series B}, 98:2003, 2003.

\bibitem{ExactSubgraphSP}
C.~B{\"u}sing.
\newblock The exact subgraph recoverable robust shortest path problem.
\newblock In {\em Robust and Online Large-Scale Optimization}, pages 231--248.
  2009.

\bibitem{CharikarChekuriETC}
M.~Charikar, C.~Chekuri, T.~Cheung, Z.~Dai, A.~Goel, S.~Guha, and M.~Li.
\newblock Approximation algorithms for directed steiner problems.
\newblock In {\em Proceedings of the ninth annual ACM-SIAM symposium on
  Discrete algorithms}, SODA '98, pages 192--200, 1998.

\bibitem{CheriyanThurimella}
J.~Cheriyan and R.~Thurimella.
\newblock {Approximating Minimum-Size k-Connected Spanning Subgraphs via
  Matching}.
\newblock {\em SIAM J. Comput}, 30:292–301, 2000.

\bibitem{DemandR}
K.~Dhamdhere, V.~Goyal, R.~Ravi, and M.~Singh.
\newblock How to pay, come what may: Approximation algorithms for demand-robust
  covering problems.
\newblock In {\em FOCS '05: Proceedings of the 46th Annual IEEE Symposium on
  Foundations of Computer Science}, pages 367--378, Washington, DC, USA, 2005.
  IEEE Computer Society.

\bibitem{ExpSce1}
U.~Feige, K.~Jain, M.~Mahdian, and V.~Mirrokni.
\newblock Robust combinatorial optimization with exponential scenarios.
\newblock In {\em IPCO '07: Proceedings of the 12th international conference on
  Integer Programming and Combinatorial Optimization}, pages 439--453, Berlin,
  Heidelberg, 2007. Springer-Verlag.

\bibitem{kEdgeConnected1}
H.~N. Gabow, M.~X. Goemans, \'{E}. Tardos, and D.~P. Williamson.
\newblock Approximating the smallest k-edge connected spanning subgraph by
  lp-rounding.
\newblock In {\em SODA '05: Proceedings of the sixteenth annual ACM-SIAM
  symposium on Discrete algorithms}, pages 562--571, Philadelphia, PA, USA,
  2005. Society for Industrial and Applied Mathematics.

\bibitem{DemandR2}
D.~Golovin, V.~Goyal, and R.~Ravi.
\newblock Pay today for a rainy day: Improved approximation algorithms for
  demand-robust min-cut and shortest path problems.
\newblock In B.~Durand and W.~Thomas, editors, {\em Proceedings of the 23rd
  Symposium on Theoretical Aspects of Computer Science, STACS 2006}, volume
  3884 of {\em Lecture Notes in Computer Science}, pages 206--217.
  Springer-Verlag, 2006.

\bibitem{HalperinKrauthgamer}
E.~Halperin and R.~Krauthgamer.
\newblock Polylogarithmic inapproximability.
\newblock In {\em Proceedings of the thirty-fifth annual ACM symposium on
  Theory of computing}, STOC '03, pages 585--594, New York, NY, USA, 2003. ACM.

\bibitem{Jain}
K.~Jain.
\newblock {A Factor 2 Approximation Algorithm for the Generalized Steiner
  Network Problem}.
\newblock {\em Foundations of Computer Science, Annual IEEE Symposium on},
  0:448, 1998.

\bibitem{ExpSce2}
R.~Khandekar, G.~Kortsarz, V.~Mirrokni, and M.~R. Salavatipour.
\newblock Two-stage robust network design with exponential scenarios.
\newblock In {\em ESA '08: Proceedings of the 16th annual European symposium on
  Algorithms}, pages 589--600, Berlin, Heidelberg, 2008. Springer-Verlag.

\bibitem{RDOBook}
P.~Kouvelis and G.~Yu.
\newblock {\em Robust Discrete Optimization and Its Applications}.
\newblock Kluwer Academic Publishers, Boston., 1997.

\bibitem{RRFirst}
C.~Liebchen, M.~L\"ubbecke, R.~H. M\"ohring, and S.~Stiller.
\newblock Recoverable robustness.
\newblock Technical report, ARRIVAL-Project, August 2007.

\bibitem{RRSP}
C.~Puhl.
\newblock Recoverable robust shortest path problems.
\newblock Preprint 034-2008, Intitute of Mathematics, Technische Universit\"at
  Berlin, 2008.

\bibitem{Schrijver}
A.~Schrijver.
\newblock {\em Combinatorial Optimization --- Polyhedra and Efficiency}.
\newblock Springer, 2003.

\bibitem{RSP}
G.~Yu and J.~Yang.
\newblock On the robust shortest path problem.
\newblock {\em Computers \& Operations Research}, 25(6):457 -- 468, 1998.

\bibitem{CompRSP}
P.~Zielinski.
\newblock The computational complexity of the relative robust shortest path
  problem with interval data.
\newblock {\em European Journal of Operational Research}, 158(3):570--576,
  November 2004.

\end{thebibliography}


\end{document}